\newcommand{\until}[1]{\{1,\dots,#1\}}
\newcommand{\map}[3]{#1:#2 \rightarrow #3}
\newcommand{\oprocendsymbol}{\hbox{$\bullet$}}
\newcommand{\oprocend}{\relax\ifmmode\else\unskip\hfill\fi\oprocendsymbol}
\newcommand{\longthmtitle}[1]{\mbox{}\textup{\textbf{(#1):}}}
\newcommand*{\SetSuchThat}[1][]{} 
\newcommand*{\MvertSets}{%
    \renewcommand*\SetSuchThat[1][]{%
        \mathclose{}%
        \nonscript\;##1\vert\penalty\relpenalty\nonscript\;%
        \mathopen{}%
    }%
}
\DeclarePairedDelimiterX \Set [2] {\lbrace}{\rbrace}
    {\,#1\SetSuchThat[\delimsize]#2\,}
\tikzset{->-/.style={decoration={
  markings,
  mark=at position #1 with {\arrow{>}}},postaction={decorate}}}
\newcommand{\R}{\mathbb{R}}
\newcommand{\Rplus}{\mathbb{R}_{> 0}}
\newcommand{\Rpluseq}{\mathbb{R}_{\geq 0}}
\newcommand{\A}{\mathcal{A}}
\newcommand{\norm}[1]{\left\Vert #1 \right\Vert}
\newcommand{\x}{\mathbf{x}}
\newcommand{\m}{\mathbf{m}}
\newcommand{\W}{\mathbf{W}}
\renewcommand{\c}{\mathbf{c}}
\newcommand{\B}{\mathbf{B}}
\newcommand{\K}{\mathbf{K}}
\newcommand{\eye}{\mathbf{I}}
\newcommand{\zero}{\mathbf{0}}
\renewcommand{\u}{\mathbf{u}}
\newcommand{\f}{\mathbf{f}}
\newcommand{\G}{\mathbf{G}}
\newcommand{\g}{\mathbf{g}}
\renewcommand{\d}{\mathbf{d}}
\newcommand{\w}{\mathbf{w}}
\newcommand{\F}{\mathbf{F}}
\newcommand{\all}{\mathrm{all}}
\newtheorem{theorem}{Theorem}[section]
\newtheorem{lemma}[theorem]{Lemma}
\newtheorem{corollary}[theorem]{Corollary}
\theoremstyle{definition}
\newtheorem{example}[theorem]{Example}
\newtheorem{remark}[theorem]{Remark}
\title{\LARGE \bf Selective Inhibition and Recruitment in
  Linear-Threshold Thalamocortical Networks\thanks{This work was
    supported by NSF Award CMMI-1826065.}}
\author{Michael McCreesh \qquad Jorge Cort\'es
  \thanks{M. McCreesh and J. Cort\'es are with Department of
    Mechanical and Aerospace Engineering, UC San Diego,
    \{mmccreesh,cortes\}@ucsd.edu}}
\begin{document}
\maketitle

\begin{abstract}
  Neuroscientific evidence shows that for most brain networks all
  pathways between cortical regions either pass through the thalamus
  or a transthalamic parallel route exists for any direct
  corticocortical connection.  This paper seeks to formally study the
  dynamical behavior of the resulting thalamocortical brain networks
  with a view to characterizing the inhibitory role played by the
  thalamus and its benefits.  We employ a linear-threshold mesoscale
  model for individual brain subnetworks and study both hierarchical
  and star-connected thalamocortical networks. Using tools from
  singular perturbation theory and switched systems, we show that
  selective inhibition and recruitment can be achieved in such
  networks through a combination of feedback and feedforward control.
  Various simulations throughout the exposition illustrate the
  benefits resulting from the presence of the thalamus regarding
  failsafe mechanisms, required control magnitude, and network
  performance.
\end{abstract}

\vspace*{-1ex}
\section{Introduction}
The brain is a complex network composed of billions of individual
neurons, with their interconnections forming subnetworks that perform
a myriad of different functions. Communication of information between
regions and its subsequent processing is one such function.  Brain
regions, such as the neocortex for example, have a hierarchical
structure in which different cognitive levels operate on distinct
timescales. Within this hierarchy, information travels from faster
lower-level sensory brain regions to slower higher-level cognitive
brain regions (\emph{bottom-up} communication). Upon processing in the
higher-level regions, information regarding decisions made by these
regions is passed back down the hierarchy to perform some task
(\emph{top-down} communication). In this process, certain regions are
selectively recruited to perform the given task, while other areas are
selectively inhibited to ignore other inputs into the brain network.

Such hierarchies are not restricted to the neocortex, and neither
top-down and bottom-up communications occur entirely inside the
neocortex. In fact, most, if not all, direct corticocortical
communications have a parallel transthalamic pathway upon which the
information is transmitted and modulated~\cite{SMS:12}. Our goal here
is to understand the role of transthalamic communication in enabling
selective attention, with a view to characterizing its benefits. We
seek to provide a dynamical explanation of this phenomena and validate
the hypothesis that selective inhibition and recruitment are feasible
in thalamocortical networks via feedback and feedforward mechanisms.

\emph{Literature Review:}
The hierarchical organization of the brain has been known for
decades~\cite{NT:50,ARL:70} and has been extensively studied from
different
viewpoints~\cite{SJK-JD-KJF:08,JDM-AB-DJF-RR-JDW-XC-CP-TP-HS-DL-XW:14,DJF-DCEV:91,UH-JC-CJH:15,MIR-IT-PV:15}.
The role of the communication between the thalamus and cortical
regions in a thalamocortical hierarchy is a more recently studied
problem. Historically, the thalamus has been viewed as a relay of
sensory signals to the cortex. However, in recent literature, it has
been shown to also play a further role in cognitive
processes~\cite{KH-MAB-WBL-MD:17}. In
particular,~\cite{SMS:12,RDD-AB:17} show that the thalamus transfers
both sensory signals to the cortex using first-order relays, but that
for most direct corticocortical connection, there exists a parallel
transthalamic path made of higher-order thalamic relays. Further
evidence is given that these paths operate using feedforward
inhibitory control to communicate information from thalamic to
cortical
areas~\cite{MMH-LA-16,JMA-HAS:15,SMS-RWG:06,LG-SPJ-DEF-MC-MS:05,SJC-TJL-BWC:07}. The
works~\cite{JAH-SM-KEH-JDW-HC-AB-PB-SC-LC-AC:19,MMH-LA-16} show that
depending on the purpose (e.g., visual, auditory, somatomotor) of the
hierarchical network, the thalamus connects to the hierarchy at
different levels.  In general, little theoretical understanding is
available about the network properties of thalamocortical structures
and their role in the hierarchical nature of the brain. To address
this gap, here we employ linear-threshold
dynamics~\cite{RC-KK-MG-HK-XW:15,KM-AD-VI-CC:16} as a mesoscale model
for the behavior of neuronal populations and build on the hierarchical
selective recruitment framework introduced
in~\cite{EN-JC:21-tacI,EN-JC:21-tacII} for corticocortical
networks. The latter provides us with a baseline to compare against
when analyzing the properties and performance of thalamocortical
networks.  We also rely on results from switched piecewise and affine
systems~\cite{DL:03,MKJJ:03} and singular perturbation
theory~\cite{ANT:52,PVK-HKK:99,VV:97}.

\emph{Statement of Contributions:} We deal with thalamocortical brain
networks where each brain region is governed by a linear-threshold
rate dynamics. Given our focus on selective attention, the neuronal
populations in each region are divided into task-relevant and
task-irrelevent nodes.  Inspired by the types of pathways in thalamic
circuitry, we consider two interconnections topologies, multilayer
hierarchical networks and star-connected networks.  Our first
contribution provides an analysis of selective inhibition and
recruitment in hierarchical thalamocortical networks.  We establish
that the equilibrium maps of individual layers can be described as
piecewise-affine maps. Using singular perturbation theory for
non-smooth differential equations, we provide conditions for the
existence of feedback-feedforward control laws that achieve selective
inhibition and recruitment of the desired nodes.  Our second
contribution deals with star-connected thalamocortical networks, both
with and without timescale separation between regions.  For the latter
class, we build on a generalization of stability results on slowly
varying nonlinear systems to the case of exponential stability to
provide conditions for the existence of a feedback-feedforward
controller providing selective inhibition and recruitment. We achieve
analogous results for the case of star-connected networks with
timescale separation using again singular perturbation theory.
Examples illustrate the beneficial role played by the thalamus in
these networks with respect to metrics such as failsafe mechanisms,
control magnitude, and network performance.

\section{Neuroscientific Background}

Here we provide a summary of the neuroscientific background behind the
modeling assumptions adopted in the paper. We focus on observations
about brain organization, information transmission among brain
regions, and the role that the thalamus is believed to play.

\emph{Task-Relevant and Task-Irrelevant Neuron Populations:} in the
nervous system, stimuli are represented using series of electric
spikes generated by neurons that travel down nerve
fibres~\cite{PD-LFA:01}. A given stimulus is defined by a
characteristic pattern of spikes traveling between neurons, which can
also be represented as the firing rate of the neurons over time. In
such a representation, some neurons generate spikes (have a non-zero
firing rate) during the transmission of the stimuli, known as being
excited, while other neurons do not (have a firing rate of zero),
which is referred to as being inhibited\footnote{In network models,
  individual nodes are frequently considered to be populations of
  neurons with similar firing rates. Our discussion is not impacted by
  considering individual neurons or populations.}.  We refer to the
subset of neuron populations that are excited during the transmission
of a stimuli as the `task-relevant' nodes and the remaining
populations as the `task-irrelevant' nodes.

\emph{Information Pathways:} in the brain, there exist information
pathways between different spatial regions allowing for the
transmission of stimuli between processing areas. The transmission of
information can be seen as the activity in one region (that is, the
firing rates of the neuron populations defined by the stimuli) driving
the activity in the following region by exciting the task-relevant
nodes and inhibiting the task-irrelevant nodes to propagate the
stimuli (and any processing of it) through the pathway. This enables
the brain to generate appropriate responses to the stimuli by
propagating a response through the information pathway. This response,
with its own set of task relevant/irrelevant nodes, selectively
\emph{recruits} (excites) the task-relevant nodes, and selectively
\emph{inhibits} the task-irrelevant nodes.  Information pathways
between brain regions form both spatial and temporal hierarchies,
allowing for different levels of processing occurring in different
regions~\cite{JC-UH-CJH:15}. The temporal timescales are directly
related to the complexity of processing occurring in a given
region. For a low-level sensory area, where the inputs are brief and
are processed quickly, the timescales are fast.  Meanwhile, further up
the information pathway, in regions such as the prefrontal cortex,
higher-level cognitive processes use more complex inputs from earlier
in the pathway by integrating them over time, resulting in slower
timescales~\cite{RC-KK-MG-HK-XW:15}. These pathways have been studied
in strictly cortical
networks~\cite{RC-KK-MG-HK-XW:15,EN-JC:21-tacI,EN-JC:21-tacII}.
However, elements of such networks also have multiple connections to
the thalamus while maintaining a temporal hierarchy~\cite{SMS:12},
leading to the interest here in the role of the thalamus.

\begin{figure}[tbh]
  \centering
  \subfigure[hierarchical thalamocortical network]{
    \begin{tikzpicture}
      \node[circle, draw, line width=0.6pt, inner sep=-0.4pt] (1)
      {\includegraphics[width=24pt]{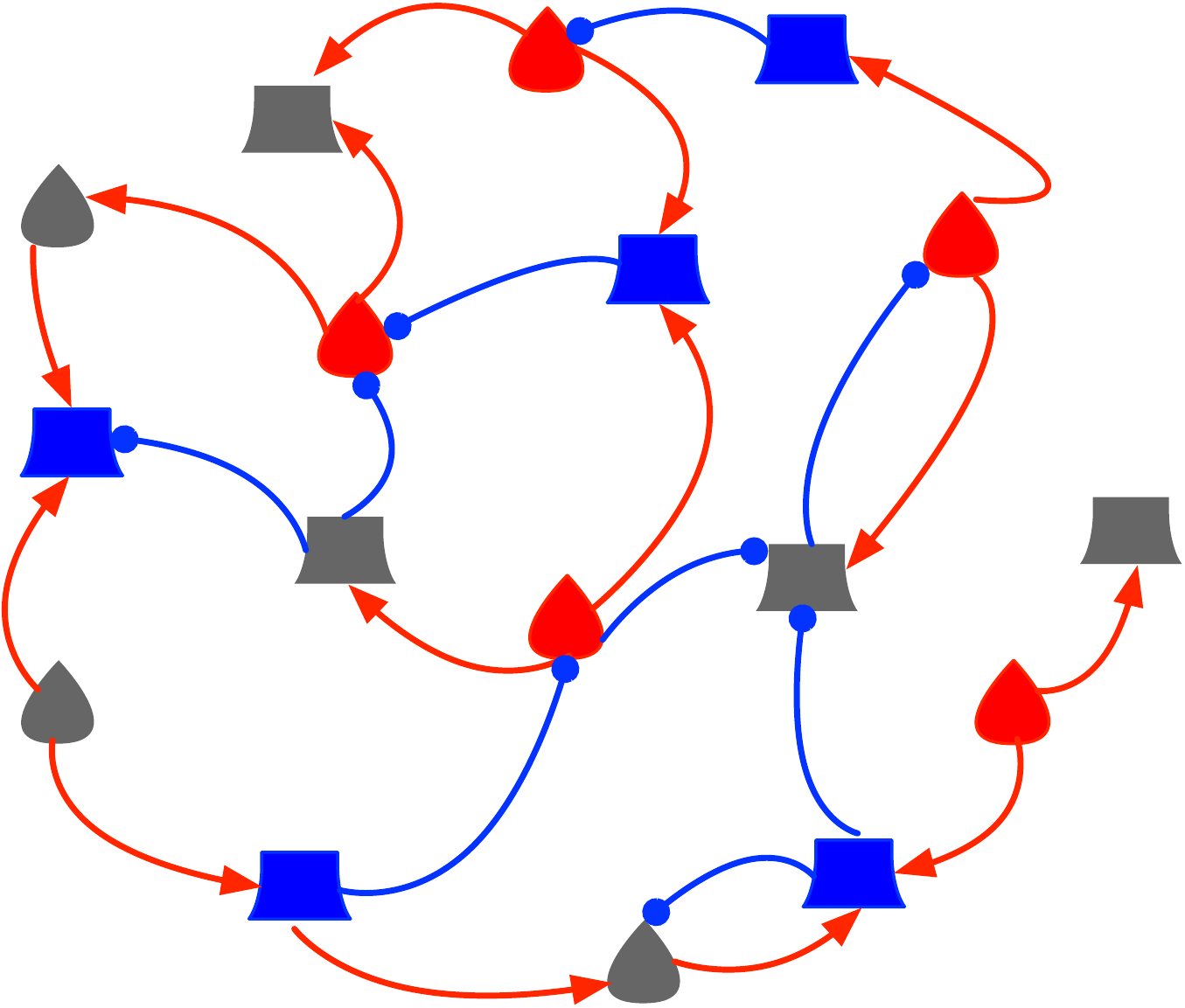}};
      \node[above of=1, yshift=30pt, circle, draw, line
      width=0.6pt, inner sep=-0.4pt] (2)
      {\includegraphics[width=24pt]{EI_network}};
      \node[below of=1, yshift=-30pt, circle, draw, line
      width=0.6pt, inner sep=-0.4pt] (3)
      {\includegraphics[width=24pt]{EI_network}};
      \draw[latex-latex, line width=0.3pt, shorten <=1pt, shorten
      >=1pt] (1.90) to (2.270);
      \draw[latex-latex, line width=0.3pt, shorten <=1pt, shorten
      >=1pt] (1.270) to (3.90);
      \node[below of=3, yshift=5pt] {$\vdots$};
      \node[above of=2, yshift=-5pt] {$\vdots$};
      \node[left of=2, xshift=-20pt, yshift=10pt, scale=0.8] (i-1)
      {Subnetwork $i - 1$};
      \node[left of=1, xshift=-20pt, yshift=10pt, scale=0.8] (i) {Subnetwork $i$};
      \node[left of=3, xshift=-20pt, yshift=10pt, scale=0.8] (i+1) {Subnetwork $i +
        1$};
      \node[right of=1, xshift=30pt, circle, draw, line
      width=0.6pt, inner sep=-0.4pt] (i-big)
      {\includegraphics[width=24pt]{EI_network}};
      \draw[latex-latex,shorten <= 1pt,shorten >= 1pt] (1) to (i-big);
      \draw[latex-latex,shorten <= 1pt,shorten >= 1pt] (2) to (i-big);
      \draw[latex-latex,shorten <= 1pt,shorten >= 1pt] (3) to (i-big);
      \node[above of=i-big, xshift=0pt, yshift=-1pt, scale=0.8] (i1)
      {\parbox{40pt}{\centering  Thalamus}};
    \end{tikzpicture}
  }
  \qquad
  \subfigure[star-connected
  thalamocortical network]{\begin{tikzpicture} \node[circle, draw, line
      width=0.6pt, inner sep=-0.4pt] (1)
      {\includegraphics[width=24pt]{EI_network}};
      \node[above of=1, xshift = 25pt, yshift = -5pt, scale=0.8]{Thalamus};
      \node[above of=1, yshift=30pt, circle, draw, line
      width=0.6pt, inner sep=-0.4pt] (2)
      {\includegraphics[width=24pt]{EI_network}};
      \node[above of=2, xshift = 0pt, yshift = -5pt, scale=0.8]{Subnetwork $1$};
      \draw[latex-latex, line width=0.3pt, shorten <=1pt, shorten
      >=1pt] (2.270) to (1.90);
      \node[left of=1, xshift=-30pt, circle, draw, line
      width=0.6pt, inner sep=-0.4pt] (3)
      {\includegraphics[width=24pt]{EI_network}};
      \node[below of=3, xshift = 0pt, yshift = 5pt, scale=0.8]{Subnetwork $4$};
      \draw[latex-latex, line width=0.3pt, shorten <=1pt, shorten
      >=1pt] (3.0) to (1.180);
      
      \node[right of=1, xshift=30pt, circle, draw, line
      width=0.6pt, inner sep=-0.4pt] (4)
      {\includegraphics[width=24pt]{EI_network}};
      \draw[latex-latex, line width=0.3pt, shorten <=1pt, shorten
      >=1pt] (1.0) to (4.180);
      \node[below of=4, xshift = 0pt, yshift = 5pt, scale=0.8]{Subnetwork $2$};  
      
      \node[below of=1, yshift=-30pt, circle, draw, line
      width=0.6pt, inner sep=-0.4pt] (5)
      {\includegraphics[width=24pt]{EI_network}};
      \node[below of=5, xshift = 0pt, yshift = 0pt, scale=0.8]{Subnetwork $3$};
      \draw[latex-latex, line width=0.3pt, shorten <=1pt, shorten
      >=1pt] (5.90) to (1.270);
    \end{tikzpicture}
}
\caption{Topologies for thalamocortical networks considered in the
  paper. In (a), each layer is connected directly to the thalamus, as
  well as the layers directly above and below it.  In (b), each layer
  is connected to the thalamus, but no direct connections exist
  between cortical regions.  In both plots, task-relevant excitatory
  and inhibitory nodes are depicted in red and blue, resp., and
  task-irrelevant nodes are depicted in
  grey. }\label{fig:thalamocortical-networks}
\vspace*{-2.5ex}
\end{figure}

\emph{Pathways between the Thalamus and Cortical Regions:} in studying
thalamocortical networks, we consider two topologies of interest, cf.
Fig.~\ref{fig:thalamocortical-networks}. Both share the common trait
that each cortical layer is connected to the thalamus layer: however,
connections between cortical layers differ in each case.  These
topologies are inspired by the fact that pathways in thalamic
circuitry can be identified into two classes~\cite{SMS-RWG:11}. The
first represents the role of the thalamus as a modulator of
information being passed between cortical regions along a
transthalamic route parallel to the existing corticocortical
information pathways (higher-order relay)~\cite{SMS:12}. Since these
cortical regions then form a temporal hierarchy, this leads to the
hierarchical thalamocortical network shown in
Fig.~\ref{fig:thalamocortical-networks}(a). The second class
represents the thalamus as the main route for the transfer of
information between two (or more) brain areas. In this case, the thalamus is
relaying an input to the cortical regions (first-order
relay)~\cite{SMS:12}, which gives rise to the star-connected network
shown in Fig.~\ref{fig:thalamocortical-networks}(b).
The cortical regions to which information is being relayed can be
parts of separate temporal hierarchies, however, as the
regions in the network do not form a temporal hierarchy themselves,
the timescales of the subnetworks are not directly related.

\emph{Role of the Thalamus:} The neuroscientific
literature~\cite{SMS-RWG:06} discusses a number of potential roles for
the thalamus.  The thalamus can significantly increase the information
contained in signals both being transmitted to and between cortical
regions~\cite{SMS:12}.  In hierarchical networks within the cortex,
transthalamic pathways allow for layers near the top of the hierarchy
to directly receive the outputs from the lower levels, in addition to
the more processed inputs they receive from other cortical
regions~\cite{SMS-RWG:06}. Thalamic signals into the higher cortical
regions can also allow for receiving further details of motor signals,
such as distinguishing between self-induced stimuli (such as those
generated by eye movements) and stimuli from the external
environment~\cite{SMS:12}. The thalamus also has a role in controlling
recurrent cortical dynamics, as the cortical networks are not able to
self-sustain activity~\cite{KR-ADL-MS:15,JMA-HAS:15}. Other roles
include contributions to learning, memory, decision-making and
inhibitory
control~\cite{ASM-SMS-MAS-RGM-RPV-YC:14,FA-VF-ARM-EJK-EC-WM:18}.

\section{Problem Setup}\label{sec:problem-setup}

We start\footnote{We let $\R$, $\Rpluseq$, denote the reals and
  nonnegative reals, resp. Vectors and matrices are identified
  by bold-faced letters. For vectors (matrices)
  $\x,\mathbf{y} \in \R^n$ (resp. $\R^{n \times m}$),
  $\x \leq \mathbf{y}$ is the component-wise comparison (analogously
  with $<,>,\geq)$.
   The identity matrix of
  dimension $n$ is~$\eye_n$. $\zero_n$ and $\zero_{n \times m}$ denote
  the $n$-vector of zeros and the $n \times m$ matrix of zeros,
  resp. For a matrix $\W \in \R^{n \times n}$, we denote its
  element-wise absolute value, spectral radius and
  induced $2$-norm by $|\W| ,\rho(\W)$ and $\norm{\W}$,
  resp. Similarly, we let $\norm{\x}$ denote the $2$-norm of a
  vector $\x \in \R^n$. 
   For $x \in \R$ and $m \in \Rplus$, $[x]_0^m$ denotes
  $\min\{\max\{x,0\},m\}$. For $\x\in \R^n$, $\m \in \Rplus^n$, this
  operation is done component-wise as
  $[\x]_\zero^\m = [ [x_1]_0^{m_1}, \dots, [x_n]_0^{m_n} ]$. For a
  $2 \times 2$-partitioned block matrix
$
  \W = 
  \begin{bmatrix}
    {\bf W}^{00} & {\bf W}^{01}
    \\
    {\bf W}^{10} & {\bf W}^{11}
  \end{bmatrix}$, we use the notation
  $\W^{\ell,\all} = \begin{bmatrix}{\bf W}^{\ell 0} & {\bf W}^{\ell
      1} \end{bmatrix}$ and
  $\W^{\all,\ell} = \begin{bmatrix}(\W^{0\ell})^\top &
    (\W^{1\ell})^\top \end{bmatrix}^\top$ for $\ell \in \{0,1\}$.
}
 by providing details on the dynamic modeling of the
thalamocortical network layers, then describe the effect that the
interconnection topology has on the input to each layer, and finally
formalize the problem under consideration.

\subsection{Network Modeling}\label{sec:modeling}

We consider a thalamocortical network $\mathcal{N}$ composed of $N$
cortical layers and the thalamus.  We use linear-threshold rate
dynamics to model the evolution of each region in the network. These
dynamics provide a mesoscale model of the evolution of the average
firing rate of populations of neurons, rather than individual spike
trains, by looking at the electrical currents flowing through synaptic
connections, see~\cite{PD-LFA:01}.  The dynamics of cortical layer
$\mathcal{N}_i$ composed of $n_i$ nodes are
\begin{align}\label{eq:lin_threshold_dynamics}
  \tau_i \dot{\x}_i = -\x_i + [\W_i\x_i + \mathbf{d}_i(t)]_\zero^{\m_i} \qquad
  \zero \leq \x_i(0) \leq \m_i, 
\end{align}
where $\x_i \in \Rpluseq^{n_i}$ represents the state of the nodes
within the layer, and each component of $\x_i$ represents a population
of neurons with similar firing rate. The matrix $\W_i \in \R^{n_i
  \times n_i}$ is the synaptic connectivity between neuron populations
within the layer and $\d_i(t) \in \R^{n_i}$ encapsulates the input
into the model,
\begin{align}\label{eq:lin_threshold_input_expansion}
  \d_i(t) = \w_i(t) + \B_i\u_i(t) + \c_i.
\end{align}
Here $\w_i(t)$ models the interconnections between layers, $\u_i(t)$
is the control used for the $n_i$ nodes in $\mathcal{N}_i$, and $\c_i$
includes any unmodeled background activity or external inputs. Finally
$\tau_i \in \Rpluseq^{n_i}$ is the timescale of the dynamics.

In our study, the control $\u_i$ inhibits $r_i \leq n_i$
task-irrelevant nodes (drives their state to zero), while the
remaining terms in~\eqref{eq:lin_threshold_input_expansion} recruit
the remaining $n_i-r_i$ nodes and determine the desired equilibrium or
trajectory.  To distinguish between task-relevant and task-irrelevant
nodes, we use the following partition of network variables in each
layer $\mathcal{N}_i$,
\begin{subequations}\label{eq:relevant-decomposition}
  \begin{align}
    \x_i = \begin{bmatrix}
      \x_i^0 \\
      \x_i^1
    \end{bmatrix} \qquad \W_{i} &= \begin{bmatrix}
      \W_{i}^{00} & \W_{i}^{01} \\
      \W_{i}^{10} & \W_{i}^{11}
    \end{bmatrix}
    \label{eq:relevant1}
    \\
    \B_i = \begin{bmatrix}
      \B_i^0 \\
      \zero
    \end{bmatrix} \qquad \c_i &= \begin{bmatrix}
      \c_i^0 \\ \c_i^1
    \end{bmatrix} \qquad \m_i = \begin{bmatrix}
      \m_i^0 \\ \m_i^1
    \end{bmatrix},
    \label{eq:relevant2}
  \end{align}
\end{subequations}
where $\x_i^0 \in \Rpluseq^{r_i}$ represent the task-irrelevant nodes,
$\x_i^1 \in \Rpluseq^{n_i-r_i}$ the task-relevant nodes and $\B_i \in
\R^{n_i \times p_i}$ is such that the task-relevant nodes are not
impacted by the control term $\u_i \in \R^{p_i}$.  Throughout we shall
assume that $p_i \geq r_i$, and that the matrices $\B_i^0$ have all
full rank.

The thalamus layer $\mathcal{N}_T$ is modeled in the same way as the
cortical layers with the difference appearing in the interconnection
term $\w_T(t)$. A final word about the brain mechanisms for the inhibition of
regions~\cite{JSI-MS:11}. Feedforward inhibition between two layers,
$\mathcal{N}_i$ and $\mathcal{N}_j$, refers to when $\mathcal{N}_i$
sends an inhibitory signal to $\mathcal{N}_j$ to achieve its goal
activity pattern for $\mathcal{N}_j$ regardless of the current state
of $\mathcal{N}_j$. In contrast, feedback inhibition refers to when
the inhibition applied in a layer $\mathcal{N}_i$ is dependent upon
the current activity level of the nodes desired to be
inhibited~\cite{JSI-MS:11}.
We employ a combination of feedback and feedforward inhibition.

\subsection{Interconnection Topology Among Network
  Layers}\label{sec:interconnection}

We detail here the dynamical interconnection between the network
layers for each of the topologies depicted in
Fig.~\ref{fig:thalamocortical-networks}.

\subsubsection*{Hierarchical Thalamocortical Networks}
Consider the hierarchical thalamocortical network depicted in
Fig.~\ref{fig:thalamocortical-networks}(a).  The hierarchical
structure is encoded by the ordered timescales of the layers:
$\tau_1 \gg \tau_2 \gg \dots \gg \tau_N$, prescribing progressively
faster dynamics as one moves down the hierarchy. For generality, the
timescale $\tau_T$ of the thalamus might fit anywhere within the
hierarchy. For each cortical layer $\mathcal{N}_i$, the
interconnection term $\w_i(t)$
in~\eqref{eq:lin_threshold_input_expansion} takes the form
\begin{align}\label{eq:inputs_cortical}
  \w_i(t) &= \W_{i,i-1}\x_{i-1}(t) + \W_{i,i+1}\x_{i+1}(t) +
  \W_{i,T}\x_T(t) .
\end{align} 
Here, the terms $\W_{i,i-1}, \W_{i,i+1}$, and $\W_{i,T}$ represent the
weights of the synaptic connections between layers $\mathcal{N}_i$ and
$\mathcal{N}_{i-1}$, $\mathcal{N}_{i+1}$ and $\mathcal{N}_T$,
resp. It is important to note that since the thalamus impacts
the cortical regions using feedforward inhibition, the interconnection
matrix between the thalamus and the cortical layer
satisfies $\W_{i,T} \leq 0$.  Substituting~\eqref{eq:inputs_cortical}
into the linear-threshold dynamics~\eqref{eq:lin_threshold_dynamics},
we get the dynamics for a cortical layer $\mathcal{N}_i$
\begin{align}\label{eq:cortical_layer_dynamics}
  \tau_i\dot{\x}_i &= -\x_i + [\W_{i,i}\x_i + \W_{i,i-1}\x_{i-1}
  \\
  &\quad + \W_{i,i+1}\x_{i+1} + \W_{i,T}\x_T + \B_i\u_i(t) +
  \c_i]_\zero^{\m_i}, \notag
\end{align}
for $i \in \until{N}$. For consistency $\W_{1,0} = \zero = \W_{N,N+1}
$, and we assume that $r_1 = 0$, meaning no nodes are being inhibited
in the top layer of the network.

For the thalamus layer, to reflect the different connectivity it has
in the network, the interconnection term $\w_T(t)$ is
\begin{align}\label{eq:inputs_thalamus}
  \w_T(t) = \sum_{i=1}^N \W_{T,i}\x_i(t)  .
\end{align}
Here $\W_{T,i}$ represents the weight of the synaptic connections
between layers $\mathcal{N}_T$ and $\mathcal{N}_i$ for
$i \in \{1,\dots,N\}$. Then, substituting~\eqref{eq:inputs_thalamus}
into~\eqref{eq:lin_threshold_dynamics}, the dynamics for the thalamus
layer is
\begin{align}\label{eq:thalamus_layer_dynamics}
  \tau_T\dot{\x}_T &= -\x_T + [\W_{T}\x_T + \sum_{i=1}^N \W_{T,i}\x_i +
  \B_T\u_T(t) + \c_T]_\zero^{\m_T}.
\end{align}
We denote the timescale ratio between layers by $\epsilon =
(\epsilon_1,\dots,\epsilon_N,\epsilon_T)$, where
\begin{align*}
  \epsilon_i =
  \frac{\tau_{i}}{\tau_{i-1}},
  \quad \epsilon_T = \frac{\tau_T}{\min_{\tau_j > \tau_T}\tau_j}.
\end{align*}
For a subnetwork $\mathcal{N}_i$ such that the thalamus timescale fits
in the hierarchy directly above it, i.e. $\tau_T > \tau_i$ but there
does not exist $j$ such that $\tau_T > \tau_j > \tau_i$, we let
the timescale ratio between $\mathcal{N}_T$ and $\mathcal{N}_i$ be
given by $\bar{\epsilon}_i = {\tau_i}/{\tau_T}$.

\subsubsection*{Star-Connected Thalamocortical Networks}

Consider the star-connected thalamocortical network depicted in
Fig.~\ref{fig:thalamocortical-networks}(b). In contrast to the
hierarchical network, this topology does not form a hierarchical
timescale, and as such there is no direct relationship satisfied by
the timescales. The lack of an explicit relation encodes the thalamus'
role as a sensory relay to multiple brain regions, each part of
potentially unrelated temporal hierarchies.  Without loss of
generality, we assume subnetwork $\mathcal{N}_1$ represents a
subcortical structure and provides the input to the network for the
thalamus to relay to the other brain regions.

As such, there are no nodes in $\mathcal{N}_1$ that are desired to be
inhibited, meaning $r_1 = 0$. We model the subcortical input
subnetwork with a linear-threshold dynamics, but without an
independent control term, instead modeling input changes by allowing
$\c_1$ to be time-varying. That is
\begin{align}\label{eq:input_star_dynamics}
  \tau_1\dot{\x}_1 = -\x_1 + [\W_{1,1}\x_1 + \W_{1,T}\x_T +
  \c_1(t)]_\zero^{\m_1}. 
\end{align}
For the cortical regions $\mathcal{N}_i$, $i \in \{2,\dots,N\}$, the
interconnection term $\w_i(t)$ is given by $\w_{i}(t) = \W_{i,T}\x_T(t)$, and we
recall that as the thalamus utilizes feedforward inhibition, $\W_{i,T} \leq
0$. Meanwhile, the interconnection of the thalamus with the cortical regions is defined by $\w_T(t) =
\sum_{i=1}^N \W_{T,i}\x_i(t)$. Then, from the linear-threshold
model~\eqref{eq:lin_threshold_dynamics}, for $i \in \{2,\dots,N\}$ the
dynamics takes the form
\begin{align}\label{eq:star_dynamics}
  \tau_i\dot{\x}_i &= -\x_i + [\W_{i,i}\x_i + \W_{i,T}\x_T + \B_i\u_i(t) +
  \c_i]_\zero^{\m_i},  \\
  \tau_T\dot{\x}_T &= -\x_T + [\W_{T}\x_T +
  \sum_{i=1}^N \W_{T,i}\x_i + \B_T\u_T(t) + \c_T]_\zero^{\m_T}. \notag
\end{align}

\subsection{Problem Statement}\label{sec:problem-statement}
For a purely cortical hierarchical brain network, selective inhibition
and recruitment can be achieved using a combination of feedback and
feedforward control, dependent on the subnetwork dynamics satisfying a
set of stability properties,
cf.~\cite{EN-JC:21-tacI,EN-JC:21-tacII}. However, with the presence of
the thalamus and given its impact on the dynamics of the individual
subnetworks, such results are inapplicable. In addition, for
star-connected thalamocortical topology the results do not apply, as
the assumption of a hierarchical relationship between subnetwork
timescales, which plays a critical role, no longer holds. As such, in
this work we study the problem of selective inhibition and recruitment
for thalamocortical networks, both hierarchical and star-connected,
which we formalize next.

We seek to determine the existence of control mechanisms, resulting
from the combination of feedback and feedforward inhibition, such that
for a thalamocortical network, with either interconnection topology as
described in Section~\ref{sec:interconnection}, each of the individual
subnetworks $\mathcal{N}_i$ and $\mathcal{N}_T$ achieves selective
inhibition and recruitment. In particular, we look for a control law
of the form
\begin{align*}
  \u_i(t) = \bar{\u}_i(t) + \K_i\x_i(t),
\end{align*}
for each $i \in \{2,\dots,N,\}\cup\{T\}$, that results in the state of
each subnetwork converging to a desired equilibrium
trajectory. Solving this problem requires determining the stability
properties that the dynamics of each of the individual subnetworks
must satisfy to guarantee the existence of such a control law.

In addition, we seek to evaluate how the addition of the thalamus
results in improved performance relative to a purely cortical
network. In particular we investigate how the addition of the thalamus
impacts the control magnitude needed to achieve selective inhibition
and the convergence speed of the network. Further, we study
differences in the hierarchical and star-connected topologies,
particularly in how the latter can act as a failsafe for the former.

\section{Hierarchical Thalamocortical
  Networks}\label{sec:multilayer_networks}

In this section, we consider a hierarchical multilayer thalamocortical
network, cf. Fig.~\ref{fig:thalamocortical-networks}(a), where the
cortical layers are governed by~\eqref{eq:cortical_layer_dynamics} and
the thalamus layer is governed by~\eqref{eq:thalamus_layer_dynamics}.

\subsection{Equilibrium Maps for Individual Layers}\label{sec:eq_maps}

We note that, for a general linear-threshold
dynamics~\eqref{eq:lin_threshold_dynamics} with
$\W \in \R^{n \times n}$, its equilibrium map
$\map{h}{\R^n}{\Rpluseq^n}$
\begin{align}\label{eq:lin_thres_eq_map}
  h(\c) = h_{\W,\m}(\c) = \{\x \in \Rpluseq^n ~|~ \x = [\W\x +
  \c]_\zero^\m \} ,
\end{align}
maps a constant input $\c \in \R^n$ to the set of equilibria
of~\eqref{eq:lin_threshold_dynamics}.  For thalamocortical networks,
the timescale of the thalamus impacts the specific form of the
equilibrium maps.  For simplicity, we assume the thalamus lies inside
the hierarchy, i.e., there exist $a,b \in \until{N}$ such that
$b = a+1$ and $\tau_a \gg \tau_T \gg \tau_b$. This choice results in a
layer $\mathcal{N}_a$ with an equilibrium map different than any that
appear in the cases when the thalamus timescale is on the boundary of
the hierarchy.  Our results can also be stated for the latter case
with appropriate adjustments to the derived control law for selective
inhibition and recruitment.

The hierarchical nature of the topology plays an important role in
defining  these equilibrium maps.
At the theoretical limit of timescale separation between two layers,
the state of a given layer becomes constant at the timescale of layers
lower in the hierarchy as well as being a static function of the above
layer's states. As such, when defining the equilibrium maps for a
given layer, the constant input into an equilibrium map can represent
the state of layers higher in the hierarchy, given that they are
relative constants at that level of the dynamics.  The equilibrium
maps for the layers fall into three categories: below or above the
thalamus, and the thalamus itself. In all cases, the maps are defined
recursively from the bottom to the top of the network. At each layer,
the equilibrium map takes a constant input, representing the inputs
from higher levels in the hierarchy along with any external inputs to
the system, and outputs the set of equilibrium values.
We next give explicit expressions for the equilibrium maps of the
task-relevant component of a layer in each of the categories, which we
denote $\map{h_i^1}{\R^{n_i-r_i}}{\Rpluseq^{n_i-r_i}}$, by combining
the hierarchical model described in Section~\ref{sec:interconnection}
with~\eqref{eq:lin_thres_eq_map}.

\subsubsection*{Equilibrium Maps for Layers below Thalamus}
We begin by considering layers below the thalamus, i.e., with
$i \geq b$. Given values of $\x_T \in \R^{n_T}$ and
$\c_{i+1}^1 \in \R^{n_{i+1} - r_{i+1}}$,
\begin{align}\label{eq:below_thal_eq_map}
  h_{i}^1(\c) &= \{\x_i^1 ~|~ \x_i^1 = [\W_{i,i}^{11}\x_i^1 +
  \W_{i,i+1}^{11}h_{i+1}^1(\W_{i+1,i}^{11}\x_i^1 \notag
  \\
  &\qquad + \W_{i+1,T}^{11}\x_T^1 + \c_{i+1}^1) + \c]_\zero^{\m_i^1}
  \}.
\end{align}
We note that since $\W_{N,N+1} = \zero$ by convention, this recursion
is well-defined for $\mathcal{N}_N$, the bottom layer in the
network. In particular, for layer $N$,~\eqref{eq:below_thal_eq_map}
reduces to~\eqref{eq:lin_thres_eq_map}, the standard equilibrium map
for linear-threshold models.

\subsubsection*{Equilibrium Map for Thalamus}
Since the thalamus is connected to all the cortical layers, the
recursive definition of its equilibrium map is dependent on the
equilibrium maps of all of the layers below it. Due to this, using the
notation of~\eqref{eq:below_thal_eq_map} for writing the recursion
becomes intractable for large networks. Instead, we introduce the
following simplifying notation. For a layer $\mathcal{N}_i$, $i \in
\until{N}$, in the thalamocortical hierarchy, let $\x_{i_e}^1 \in
\Rpluseq^{n_i-r_i}$ be such that
\begin{align*}
  \x_{i_e}^1 \in h_i^1(\W_{i,T}^{11}\x_T^1 + \W_{i,i-1}^{11}\x_{i-1}^1
  + \c_i^1).
\end{align*}
Note that, despite $\x_{i_e}$ being dependent on a set of inputs, we
do not specify these. As such, when this notation is used, it is
implicitly assumed that the input values are given or determined in
the recursion. Depending on the point in the recursion being
considered, the inputs $\x_T$ and/or $\x_{i-1}$ will be replaced by
their equilibrium values. Now, given values $\c_b^1,\dots,\c_N^1$,
where $\c_i^1 \in \R^{n_i-r_i}$, the thalamus equilibrium map
$\map{h_T^1}{\R^{n_T-r_T}}{\Rpluseq^{n_T-r_T}}$~is
\begin{align}\label{eq:thal_eq_map}
  h_{T}^1(\c) = \{\x_T^1 ~|~ \x_T^1 = [\W_{T}^{11}\x_T^1 +
  \sum_{j=b}^N \W_{T,j}^{11}\x_{j_e}^1 + \c]_\zero^{\m_T^1} \} .
\end{align}

\subsubsection*{Equilibrium Maps for Layers above Thalamus}
We note that, in the recursive definition for the equilibrium maps
above the thalamus, the inputs to the thalamus equilibrium
map~\eqref{eq:thal_eq_map} will include equilibrium values of layers
above the thalamus in the hierarchy, in addition to the equilibrium
values from lower in the hierarchy. To distinguish which maps
are inputting equilibrium maps in a condensed manner, we introduce the
following notation. For $i \in \until{a}$, we let $\x_{T_{e(i)}}^1 \in
\R^{n_T-r_T}$ denote a value inside the thalamus equilibrium set
satisfying
\begin{align*}
  \x_{T_{e(i)}}^1 \in h_T^1(\sum_{j=1}^i \W_{T,j}^{11}\x_j^1 +
  \sum_{j=i+1}^a \W_{T,j}^{11}\x_{j_e}^1 + \c_T^1).
\end{align*}
Using this notation, we define the remaining equilibrium maps. 
We first provide the equilibrium maps for layers
$\{\mathcal{N}_i\}_{i=1}^{a-1}$ and finish with the map for
$\mathcal{N}_a$. For $i \in \until{a-1}$, given $\c_{i+1}^1 \in
\R^{n_i-r_i}$, the equilibrium map
$\map{h_i^1}{\R^{n_i-r_i}}{\Rpluseq^{n_i-r_i}}$~is
\begin{align}\label{eq:above_thal_eq_map}
  h_i^1(\c) &= \{ \x_i^1~|~ \x_i^1 = [\W_{i,i}^{1,\all}\x_i +
  \W_{i,i+1}h_{i+1}^1(\W_{i+1,i}^{1,\all}\x_i + \c_{i+1}^1) \notag
  \\
  &\qquad + \W_{i,T}^{11}\x_{T_{e(i)}} + \c]_\zero^{\m_i^1} \}.
\end{align}
The expression for the equilibrium map of the layer $\mathcal{N}_a$,
which is directly above the thalamus, differs from the other layers
due to the fact that it depends directly on a layer below the thalamus
in addition to being dependent on the thalamus. Given $\c_b^1 \in
\R^{n_b-r_b}$, the equilibrium map
$\map{h_a^1}{\R^{n_a-r_a}}{\Rpluseq^{n_a-r_a}}$~is
\begin{align}\label{eq:layer_a_eq_map}
  h_a^1(\c) &= \{\x_a^1 ~|~ \x_a^1 = [\W_{a,a}^{11}\x_a^1 +
  \W_{a,b}^{11}h_b^1(\W_{b,a}^{11}\x_a^1
  \\
  &\qquad+ \W_{b,T}^{11}\x_{T_{e(a)}}^{1}) + \c_b^1) +
  \W_{a,T}^{11}\x_{T_{e(a)}}^{1} + \c]_\zero^{\m_a^1} \}. \notag
\end{align}

We conclude this section by establishing that all the equilibrium maps
in the thalamocortical
network~\eqref{eq:below_thal_eq_map}-\eqref{eq:layer_a_eq_map} are
piecewise-affine and use this fact to justify they are globally
Lipschitz too.  To begin, we note that since general linear-threshold
dynamics are switched affine, their equilibrium
map~\eqref{eq:lin_thres_eq_map} can be written in a piecewise-affine
form. In particular, this equilibrium map can be written as follows
\begin{align}\label{eq:equilibrium_map_piecewise_affine}
  h(\c) &= \{\F_\sigma\c + \f_\sigma ~|~ \G_\sigma \c + \g_\sigma \geq
  \zero,~\sigma \in \{0,\ell,s\}^n\},
\end{align}
for some matrices and vectors of the form
\begin{align*}
  \F_\sigma &= (\eye - \Sigma^{\ell}\W)^{-1}, \qquad
  \f_\sigma = (\eye -\Sigma^{\ell}\W)^{-1}\Sigma^{\mathbf{s}}\m,
  \notag
  \\
  \G_\sigma &= \begin{bmatrix}\Sigma^{\ell} + \Sigma^{s}-\eye &
    \Sigma^{\ell} & -\Sigma^{\ell} & \Sigma^s \end{bmatrix}^\top
  \F_\sigma, \notag
  \\
  \g_\sigma &= \begingroup \setlength\arraycolsep{2pt}
  \begin{bmatrix}
    \f_\sigma^\top(\Sigma^{\ell}+\Sigma^s-\eye) &
    \f_\sigma^\top\Sigma^{\ell} & (\m-\f_\sigma)^\top\Sigma^{\ell} &
    (\f_\sigma-\m)^\top\Sigma^s \end{bmatrix}^\top,
  \endgroup \notag
\end{align*}
where $\Sigma^{\ell}$ is a diagonal matrix with $\Sigma^{\ell}_{ii} =
1$ if $\sigma_i = \ell$ and zero otherwise, and $\Sigma^{s}$ is
defined analogously. Now, since $\W_{N,N+1} = \zero$, the equilibrium
map for the bottom layer in the hierarchy, $h_N^1$, has the same form
as~\eqref{eq:lin_thres_eq_map}, and hence can be written in the
form~\eqref{eq:equilibrium_map_piecewise_affine}.
The following result, which is a generalization of~\cite[Lemma
IV.1]{EN-JC:21-tacII}, illustrates that in fact all the equilibrium
maps in the hierarchical thalamocortical network are piecewise-affine.

\begin{lemma}\longthmtitle{Piecewise-affinity of equilibrium maps in
    hierarchical thalamocortical linear-threshold
    models}\label{lemma:piecewise_affine_eq_maps}
  Let $\map{h_i}{\R^n}{\R^n}$, $i \in \{1,\dots,k\}$, be
  piecewise-affine functions,
  \begin{align*}
    h_i(\c) = \F^{i}_{\lambda_i}\c + \f^i_{\lambda_i}, \qquad \forall
    \c \in \Psi_{\lambda_i} \triangleq \{\c~|~\G_{\lambda_i}^i\c +
    \g_{\lambda_i}^i \geq \zero \} ,
  \end{align*}
  for $\lambda_i \in \Lambda_i$, where $\Lambda_i$ is a finite index
  set such that $\bigcup_{\lambda_i \in \Lambda_i} \Psi_{\lambda_i} =
  \R^n$. Define $\Lambda = \Lambda_1 \times \Lambda_2 \times \dots
  \times \Lambda_k$, $\lambda = (\lambda_1,\dots,\lambda_k)$ and
  $\Psi_\lambda = (\Psi_{\lambda_1},\dots,\Psi_{\lambda_k})$. Given
  matrices $\W_1,\W_2^i,\W_3^i$ and vectors $\bar{\c}_i,\c'$ for all
  $i \in \{1,\dots,k\}$ assume
  \begin{align}\label{eq:equilibrium_expression_piecewise_affine_lem}
    \x = \left[\W_1\x + \sum_{i=1}^k \W_2^i h_i(\W_3^i\x + \bar{\c}_i)
      + \c'\right]_\zero^\m,
  \end{align}
  is known to have a unique solution $\x' \in \R^{n'}$ for each $\c'
  \in \R^{n'}$, and let $h'(\c')$ be this solution. Then, there exists
  a finite index set $\Lambda'$ and
  $\{(\F_{\lambda}',\f_{\lambda}',\G_{\lambda}',\g_{\lambda}')\}_{\lambda'
    \in \Lambda'}$ such that
  \begin{align*}
    h'(\c') = \F_{\lambda'}'\c' + \f_{\lambda'}', \quad \forall \c'
    \in \Psi_{\lambda'}' \triangleq \{\c' ~|~ \G_{\lambda'}'\c' +
    \g_{\lambda'}' \geq \zero \}
  \end{align*}
  for $\lambda' \in \Lambda'$ and $\bigcup_{\lambda' \in \Lambda'}
  \Psi_{\lambda'} = \R^{n'}$.
\end{lemma}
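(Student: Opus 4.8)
The plan is to exploit the fact that the right-hand side of \eqref{eq:equilibrium_expression_piecewise_affine_lem} is built from compositions and sums of piecewise-affine maps, so that after partitioning $\R^{n'}$ into finitely many polyhedral \emph{modes} the fixed-point equation becomes affine on each one. First I would recall, as in \eqref{eq:equilibrium_map_piecewise_affine}, that the saturation operator $[\,\cdot\,]_\zero^\m$ is itself piecewise-affine: writing $\sigma \in \{0,\ell,s\}^{n'}$ for the activation pattern of its bracketed argument and $\Sigma^{\ell},\Sigma^{s}$ for the associated diagonal selection matrices, the operator acts on the polyhedron where that pattern holds as multiplication by $\Sigma^{\ell}$ followed by addition of $\Sigma^{s}\m$. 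Simultaneously, each composite map $\x \mapsto h_i(\W_3^i\x + \bar{\c}_i)$ is piecewise-affine in $\x$, equal to $\F^i_{\lambda_i}(\W_3^i\x+\bar{\c}_i) + \f^i_{\lambda_i}$ on the cell $\{\x \mid \G^i_{\lambda_i}(\W_3^i\x+\bar{\c}_i)+\g^i_{\lambda_i}\geq\zero\}$. Taking the common refinement over $i \in \{1,\dots,k\}$, indexed by $\lambda \in \Lambda$, the whole sum $\sum_{i=1}^k \W_2^i h_i(\W_3^i\x+\bar{\c}_i)$ is a single affine function of $\x$ on each cell. A combined mode is thus a pair $(\sigma,\lambda) \in \{0,\ell,s\}^{n'}\times\Lambda$, and there are only finitely many.

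Second, I would carry out the per-mode reduction. Fixing $(\sigma,\lambda)$ and restricting to the $\x$ whose pre-threshold argument realizes pattern $\sigma$ and which lie in cell $\lambda$, equation \eqref{eq:equilibrium_expression_piecewise_affine_lem} collapses to the affine identity
\begin{align*}
  \mathbf{M}_{\sigma\lambda}\,\x = \Sigma^{\ell}\Big(\sum_{i=1}^k \W_2^i(\F^i_{\lambda_i}\bar{\c}_i + \f^i_{\lambda_i}) + \c'\Big) + \Sigma^{s}\m,
  \quad \text{where } \mathbf{M}_{\sigma\lambda} = \eye - \Sigma^{\ell}\Big(\W_1 + \sum_{i=1}^k \W_2^i\F^i_{\lambda_i}\W_3^i\Big).
\end{align*}
Provided $\mathbf{M}_{\sigma\lambda}$ is invertible, solving gives $\x = \F'_{\lambda'}\c' + \f'_{\lambda'}$ with $\lambda' = (\sigma,\lambda)$, an affine function of $\c'$ of exactly the claimed form.

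The hard part will be justifying the invertibility of $\mathbf{M}_{\sigma\lambda}$, and this is where the uniqueness hypothesis earns its keep. I would argue by contradiction: let $\c'$ lie in the interior of the set of inputs realizing mode $(\sigma,\lambda)$, let $\x' = h'(\c')$ be the unique solution, and suppose $\mathbf{M}_{\sigma\lambda}\mathbf{v}=\zero$ for some $\mathbf{v}\neq\zero$. For $|t|$ small, $\x'+t\mathbf{v}$ still has pre-threshold pattern $\sigma$ and still lies in cell $\lambda$ (we are in the interior), so on this locus the original equation is equivalent to the affine identity above; since $\mathbf{M}_{\sigma\lambda}(\x'+t\mathbf{v}) = \mathbf{M}_{\sigma\lambda}\x'$, the perturbed point $\x'+t\mathbf{v}$ is also a solution of \eqref{eq:equilibrium_expression_piecewise_affine_lem}, contradicting uniqueness. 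Hence $\mathbf{M}_{\sigma\lambda}$ is invertible on every mode with nonempty interior, and the affine formula holds there.

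Finally, I would assemble the global statement. For the region description, substituting the affine solution $\x = \F'_{\lambda'}\c'+\f'_{\lambda'}$ into the inequalities defining mode $(\sigma,\lambda)$ — the $\sigma$-activation inequalities on the bracketed argument together with $\G^i_{\lambda_i}(\W_3^i\x+\bar{\c}_i)+\g^i_{\lambda_i}\geq\zero$ — turns all of them into affine inequalities in $\c'$, which I collect into $\G'_{\lambda'}\c'+\g'_{\lambda'}\geq\zero$, so each region is the polyhedron $\Psi'_{\lambda'}$. For the covering, given any $\c'$, existence provides $\x'=h'(\c')$, whose argument realizes some pattern $\sigma$ and which, because the cells $\Psi_{\lambda_i}$ cover $\R^n$ for each $i$, lies in some cell $\lambda$; hence $\c'\in\Psi'_{(\sigma,\lambda)}$ and $\bigcup_{\lambda'}\Psi'_{\lambda'}=\R^{n'}$. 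Taking $\Lambda'$ to be the finite set of modes with nonempty interior (degenerate, lower-dimensional modes being absorbed into the closures of the others, with consistency of values across shared boundaries guaranteed by uniqueness) completes the argument.
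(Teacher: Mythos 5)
The paper never actually proves this lemma: it states it as a generalization of \cite[Lemma IV.1]{EN-JC:21-tacII} and defers the argument to that reference, so your proposal must be judged on its own merits. Your architecture is the natural one, and your per-mode reduction is correct: on a combined mode $(\sigma,\lambda)$ the fixed-point equation is exactly $\mathbf{M}_{\sigma\lambda}\x = \Sigma^{\ell}\big(\sum_{i=1}^k \W_2^i(\F^i_{\lambda_i}\bar{\c}_i+\f^i_{\lambda_i})+\c'\big)+\Sigma^{s}\m$, and substituting the solved affine formula back into the mode inequalities does produce polyhedral regions in $\c'$, with the covering following from existence of a solution for every input.

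The genuine gap is your invertibility argument for $\mathbf{M}_{\sigma\lambda}$. You take $\c'$ in the interior of the set of inputs realizing the mode and claim that, for small $t$, the point $\x'+t\mathbf{v}$ still realizes pattern $\sigma$ and still lies in cell $\lambda$. That requires $\x'=h'(\c')$ to lie in the interior of the mode region \emph{in $\x$-space}, i.e., all activation and cell inequalities strict at $\x'$, and interiority in $\c'$-space does not give you this: the solution can sit exactly on a mode boundary for every $\c'$ in an open set (for instance, a saturated coordinate $x_j'=m_j$ can put $\W_3^i\x'+\bar{\c}_i$ on a cell boundary of $h_i$ identically). In that situation $\x'+t\mathbf{v}$ exits the mode, the affine identity no longer represents the original equation at the perturbed point, and no contradiction with uniqueness arises. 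The conclusion you want is nevertheless true, and it follows from \emph{existence} rather than uniqueness: the rows of $\mathbf{M}_{\sigma\lambda}$ with $\sigma_j\in\{0,s\}$ are rows of the identity, so after eliminating those pinned coordinates the system is square in the $\ell$-coordinates with right-hand side containing the $\ell$-components of $\c'$ as a free summand; were the reduced matrix singular, solvability would confine those components to a proper affine subspace, contradicting the fact that the mode system is solved by $h'(\c')$ on an open set of inputs. (Alternatively, you can bypass invertibility altogether: the set $S_{\sigma\lambda}$ of pairs $(\c',\x)$ satisfying the mode inequalities together with the affine identity is a convex polyhedron which, by uniqueness, projects injectively onto $\c'$; a convex graph over a convex domain is affine, and projections of polyhedra are polyhedra.) Your one-line disposal of the degenerate modes needs the same repair: to conclude that the finitely many full-dimensional polyhedra cover $\R^{n'}$, note that their union is closed while the realization sets of singular modes lie in proper affine subspaces, which is again the range argument above; the value consistency across shared boundaries is, as you say, immediate from uniqueness.
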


The equilibrium maps for the hierarchical thalamocortical network
satisfy Lemma~\ref{lemma:piecewise_affine_eq_maps} with $r=1$, $r=N-b$
and $r=2$ for the layers below the thalamus, the thalamus, and the
layers above the thalamus, resp., and therefore the maps are
piecewise-affine. Finally, the fact that these maps are globally
Lipschitz follows from~\cite[Lemma IV.2]{EN-JC:21-tacII}. This
property is necessary to be able to apply later the generalization of
Tikhonov's singular perturbation stability theorem to non-smooth ODEs
given in~\cite[Proposition 1]{VV:97}\footnote{To apply the result to a
  non-smooth ODE such as~\eqref{eq:lin_threshold_dynamics} we need to
  justify the following: $1)$ Lipschitzness of the dynamics uniformly
  in $t$, $2)$ Existence, uniqueness and Lipschitzness of the
  equilibrium map of the fast dynamics, $3)$ Lipschitzness and
  boundedness of the reduced-order model, $4)$ Asymptotic stability of
  the fast dynamics uniformly in $t$ and the slow variable, and $5)$
  Global attractivity of the fast dynamics for any fixed slow
  variable.} to take advantage of the timescale separation between
the layers.

\subsection{Stability Assumptions and
  Conditions}\label{sec:stability_assumptions}
In the hierarchical thalamocortical model described in
Section~\ref{sec:problem-setup}, only the task-irrelevant components
of the dynamics are directly controlled over,
cf.~\eqref{eq:relevant2}. This means that assumptions on the stability
of the task-relevant components of the dynamics are needed to
guarantee their stability and recruitment to an equilibrium
trajectory. In particular, we are interested in the reduced-order
task-relevant dynamics, that is, the system dynamics in which the
inputs from layers lower in the hierarchy have been replaced by their
equilibrium values. Here, we provide details on these assumptions and
identify sufficient conditions for them to hold. We split our
discussion on the assumptions based upon the location of a layer in
the hierarchy as different locations maintain both different roles and
equilibrium maps.
\subsubsection*{Top Layer of the Hierarchy}
The top layer $\mathcal{N}_1$ in the hierarchy does not have any nodes
that are to be inhibited. In addition, its role is in driving the
selective recruitment in the lower levels, rather than being recruited
itself. As such, we only require that the trajectories of its dynamics
are bounded. Formally, for all sets of constants $\c_1 \in \R^{n_1}$,
$\c_i^1 \in \R^{n_i-r_i}$, $i \in \{2,\dots,N\}$, and
$\c_T \in \R^{n_T-r_T}$, we assume
\begin{align}
  \tau_1\dot{\bar{\x}}_1 &= -\x_1 + [\W_{1,1}\bar{\x}_1 +
  \W_{1,2}^{11}h_2^1(\W_{2,1}\bar{\x}_1 + \c_2^1) \notag
  \\
  &\qquad + \W_{1,T}^{11}\x_{T_{e(1)}} +
  \c_1]_\zero^{\m_1},\label{eq:reduced_dynamics_top}
\end{align}
has bounded solutions. We note that our earlier assumption that the
thalamus is in the middle of the hierarchy makes the top layer a
cortical one. If, instead, the thalamus was the top layer, one would
instead assume here that its dynamics has bounded solutions,
replacing~\eqref{eq:reduced_dynamics_top} accordingly.

\subsubsection*{Lower Layers in the Hierarchy}
In each of the layers below the top one, we seek to accomplish
selective inhibition and recruitment by having the dynamics converge
to a parameter-dependent equilibrium trajectory. For the
task-irrelevant components, we aim to use a control law to stabilize
them to zero. If successful, what remains is the task-relevant
dynamics, which we then need to be globally exponentially stable to a
parameter-dependent equilibrium trajectory. In such a case, an
additional challenge is then to identify conditions on the
interconnected network layers that ensure the coupled system displays
the desired behavior.  Due to the different reduced-order dynamics
throughout the network, we provide the details for this assumption in
four categories: layers above the thalamus, layer directly above the
thalamus, thalamus, and layers below the thalamus.

\subsubsection*{Layers above the Thalamus}
For each layer $\mathcal{N}_i$, with $i \in \until{a-1}$, above the
thalamus, the dynamics are directly dependent on two
layers below it, and hence the reduced-order dynamics are dependent on
two equilibrium maps. For achieving selective inhibition and
recruitment, we assume that, for all sets of constants
$\c_{i+1}^1 \in \R^{n_{i+1}-r_{i+1}}$, $\c_i^1 \in \R^{n_i-r_i}$, and
$\c_j^1 \in \R^{n_j-r_j}$, $j \in \{i+2,\dots,N\}\cup\{T\}$, the
reduced-order dynamics
\begin{align}
  \tau_i\dot{\x}_i^1 &= -\x_i^1 + [\W_{i,i}^{11}\x_i^1 +
  \W_{i,i+1}^{11}h_{i+1}^1(\W_{i+1,i}^{11}\x_i^1 \notag
  \\
  & \quad + \c_{i+1}^1) + \W_{i,T}^{11}\x_{T_{e(i)}} +
  \c_i^1]_\zero^{\m_i^1}, \label{eq:reduced_dynamics_above_thal}
\end{align}
are GES to an equilibrium trajectory defined by the chosen constants.

\subsubsection*{Layer Directly above the Thalamus}
For the layer $\mathcal{N}_a$ directly above the thalamus, while still
dependent on two layers below it in the hierarchy, one
of these layers is below the thalamus. In the reduced-order model,
this changes the equilibrium maps. Thus, for all constants $\c_{a}^1
\in \R^{n_a-r_a}$, $\c_b^1 \in \R^{n_b-r_b}$, and $\c_i^1 \in
\R^{n_i-r_i}$, $i \in \{b+1,\dots,N\}\cup\{T\}$, we assume its reduced-order
dynamics

\begin{align}
  \tau_a\dot{\x}_a^1 &= -\x_a^1 + [\W_{a,a}^{11}\x_a^1 +
  \W_{a,b}^{11}h_b^1(\W_{b,a}^{11}\x_a^1
  \notag \\
  & \quad + \W_{b,T}\x_{T_{e(a)}} + \c_b^1) +
  \W_{a,T}^{11}\x_{T_{e(a)}} +
  \c_a^1]_\zero^{\m_a^1}, \label{eq:reduced_dynamics_layer_a}
\end{align}
are GES to an equilibrium trajectory determined by the chosen constants.

\subsubsection*{Thalamus Layer}
Since the thalamus layer depends on all the other layers in the
network, its reduced-order dynamics are dependent on the $N-b$
equilibrium maps of all the layers (from $\mathcal{N}_b$ to
$\mathcal{N}_N$) below the thalamus.  For all constants $\c_b^1 \in
\R^{n_b-r_b},\dots,\c_N^1 \in \R^{n_N-r_N}$ and $\c_T^1 \in
\R^{n_T-r_T}$, we assume the reduced-order dynamics
\begin{align}
  \tau_T\dot{\x}_T^1 &= -\x_T^1 + \big[\W_T^{11}\x_T^1 + \sum_{j=1}^a
  \W_{T,j}^{1,\all}\x_j^1 \notag
  \\
  & \quad + \sum_{j=b}^N \W_{T,j}^{11}\x_{j_e}^1 + \c_T^1
  \big]_\zero^{\m_T^1}, \label{eq:reduced_dynamics_thal}
\end{align}
are GES to an equilibrium trajectory defined by the 
constants $\c_b^1,\dots,\c_N^1$ and $\c_T^1$.  

\subsubsection*{Layers below the Thalamus}
Finally, the layers below the thalamus are directly dependent on only
one layer below them in the hierarchy. For all constants $\c_{i+1}^1
\in \R^{n_{i+1}-r_{i+1}}$ and $\c_i^1 \in \R^{n_i-r_i}$, we assume its
reduced-order dynamics
\begin{align}
  \tau_i\dot{\x}_i^1 &= -\x_i^1 + [\W_{i,i}^{11}\x_i^1 +
  \W_{i,i+1}^{11}h_{i+1}^1(\W_{i+1,i}^{1,\all}\x_{i} \notag
  \\
  & \quad + \W_{i+1,T}^{1,\all}\x_T + \c_{i+1}^1) +
  \W_{i,T}^{1,\all}\x_T +
  \c_i^1]_\zero^{\m_i^1}, \label{eq:reduced_dynamics_below_thal}
\end{align}
and we assume they are GES to an equilibrium trajectory dependent upon the 
constants $\c_{i+1}^1$ and $\c_i^1$.

We next turn our attention to identifying conditions under which the
above assumptions on the individual subnetworks hold, specifically,
ensuring that the reduced-order dynamics for layers
$\mathcal{N}_2,\dots,\mathcal{N}_N,\mathcal{N}_T$ are GES.  The
following result makes use of the fact that the equilibrium maps are
piecewise-affine, cf. Lemma~\ref{lemma:piecewise_affine_eq_maps}.

\begin{lemma}\longthmtitle{Sufficient condition for existence and
    uniqueness of equilibria and GES in multilayer linear-threshold
    networks with parallel
    connections}\label{lemma:sufficient_GES_conditions}
  Let $\map{h_i}{\R^n}{\R^n}$, $i \in \{1,\dots,K\}$, be
  piecewise-affine functions,
  \begin{align*}
    h_i(\c) = \F^{i}_{\lambda_i}\c + \f^i_{\lambda_i}, \qquad \forall
    \c \in \Psi_{\lambda_i} \triangleq \{\c~|~\G_{\lambda_i}^i\c +
    \g_{\lambda_i}^i \} 
  \end{align*}
  for all $\lambda_i \in \Lambda_i$ where $\Lambda_i$ is a finite
  index set such that $\bigcup_{\lambda_i \in \Lambda_i}
  \Psi_{\lambda_i} = \R^n$. Define $\bar{\F}_i \triangleq
  \max_{\lambda_i \in \Lambda_i} |\F^i_{\lambda_i}|$ as the matrix
  made of the entry-wise maximum of the elements in
  $\{|\F_{\lambda_i}^i|\}_{\lambda_i \in \Lambda_i}$. For $i \in
  \{1,\dots,K\}$ let the matrices $\W_1^i,\W_2^i,$ be arbitrary and
  also consider arbitrary matrix $\W$. Then, if
  \begin{align*}
    \rho\left(|\W| + \sum_{i=1}^K |\W_1^i|\bar{\F}_i|\W_2^i|\right) < 1,
  \end{align*}
  the dynamics
  \begin{align*}
    \tau\dot{\x} &= -\x + \left[\W\x + \sum_{i=1}^K \W_1^i
      h_i(\W_2^i\x + \bar{\c}_i) + \c \right]_\zero^\m
  \end{align*}
  is GES to a unique for all constants $\bar{\c}_i,\c$.
\end{lemma}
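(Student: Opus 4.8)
The plan is to view the bracketed right-hand side as a map that is a contraction in a suitable weighted norm, deduce existence and uniqueness of the equilibrium from Banach's theorem, and then certify global exponential stability with a matching weighted max-norm Lyapunov function. Write
\[
F(\x) = \Big[\W\x + \sum_{i=1}^K \W_1^i\, h_i(\W_2^i\x + \bar{\c}_i) + \c\Big]_\zero^\m,
\]
so that equilibria are exactly the fixed points of $F$ and the dynamics read $\tau\dot{\x} = -\x + F(\x)$. Throughout, inequalities between vectors and matrices are component-wise, and I set $M \triangleq |\W| + \sum_{i=1}^K |\W_1^i|\,\bar{\F}_i\,|\W_2^i|$, the nonnegative matrix appearing in the hypothesis.

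First I would establish the component-wise contraction estimate $|F(\x) - F(\mathbf{y})| \leq M|\x - \mathbf{y}|$. This rests on three ingredients: the saturation $[\cdot]_\zero^\m$ is non-expansive component-wise, so it can be dropped after taking absolute values; the affine term contributes $|\W|\,|\x - \mathbf{y}|$; and each piecewise-affine map satisfies $|h_i(\mathbf{a}) - h_i(\mathbf{b})| \leq \bar{\F}_i\,|\mathbf{a} - \mathbf{b}|$. I would obtain this last bound by restricting $h_i$ to the segment joining $\mathbf{b}$ to $\mathbf{a}$: since $h_i$ is continuous and piecewise-affine, it is piecewise-linear along the segment with slope matrices drawn from $\{\F_{\lambda_i}^i\}$, and because the successive increments all point along $\mathbf{a} - \mathbf{b}$ their absolute values sum component-wise to $|\mathbf{a} - \mathbf{b}|$; bounding each slope entrywise by $\bar{\F}_i$ then yields the claim. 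Composing with $\W_2^i$ and $\W_1^i$ and summing produces the estimate with matrix $M$. The continuity of the $h_i$, which is what makes the accumulation across pieces telescope correctly, is precisely the property guaranteed for equilibrium maps of linear-threshold models by Lemma~\ref{lemma:piecewise_affine_eq_maps} and the global Lipschitz result cited thereafter.

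Since $M \geq \zero$ and $\rho(M) < 1$, the Neumann series $(\eye - M)^{-1} = \sum_{k\geq 0} M^k$ converges and is nonnegative; choosing $\mathbf{v} = (\eye - M)^{-1}\mathbf{1} \geq \mathbf{1} > \zero$ gives $M\mathbf{v} = \mathbf{v} - \mathbf{1} \leq \gamma \mathbf{v}$ with $\gamma = 1 - 1/\max_j v_j \in (0,1)$. In the weighted norm $\norm{\mathbf{z}}_{\mathbf{v}} = \max_j |z_j|/v_j$, the estimate of the previous paragraph gives $\norm{F(\x) - F(\mathbf{y})}_{\mathbf{v}} \leq \gamma \norm{\x - \mathbf{y}}_{\mathbf{v}}$, so $F$ is a contraction. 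Banach's fixed-point theorem then produces a unique $\x^\ast$ with $F(\x^\ast) = \x^\ast$, establishing existence and uniqueness of the equilibrium for every choice of the constants $\bar{\c}_i$ and $\c$.

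Finally I would prove GES with $V(\x) = \max_j |x_j - x_j^\ast|/v_j$. Writing $\mathbf{e} = \x - \x^\ast$ and using $\dot{e}_j = \tau_j^{-1}(-e_j + F_j(\x) - F_j(\x^\ast))$ together with $|F_j(\x) - F_j(\x^\ast)| \leq (M|\mathbf{e}|)_j$ and $|e_k| \leq v_k V$, a short computation on the index set where the maximum is attained gives the Dini inequality $D^+ V \leq -\tfrac{1-\gamma}{\tau_{\max}} V$, with $\tau_{\max} = \max_j \tau_j$. The comparison lemma yields $V(\x(t)) \leq e^{-(1-\gamma)t/\tau_{\max}} V(\x(0))$, i.e.\ exponential convergence of $\x$ to $\x^\ast$ from every initial condition, which is the asserted GES. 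The main obstacle I anticipate is the component-wise Lipschitz bound on the $h_i$: it requires accumulating the slope bounds correctly across the affine pieces and genuinely uses continuity of the equilibrium maps. Once that bound is secured, the contraction and Lyapunov arguments are routine, the only bookkeeping being that the vector timescale $\tau$ enters solely through $\tau_{\max}$ in the decay rate.
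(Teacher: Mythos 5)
Your proof is correct and follows essentially the same line of argument the paper relies on: the paper gives no standalone proof of this lemma, deferring instead to the generalization of \cite[Theorem IV.4]{EN-JC:21-tacII}, whose argument is precisely this combination of entry-wise Lipschitz bounds $|h_i(\mathbf{a})-h_i(\mathbf{b})| \leq \bar{\F}_i|\mathbf{a}-\mathbf{b}|$ for the continuous piecewise-affine maps (cf.\ \cite[Lemma IV.2]{EN-JC:21-tacII}), the resulting component-wise contraction estimate with the matrix $M = |\W| + \sum_{i=1}^K |\W_1^i|\bar{\F}_i|\W_2^i|$, and a weighted norm induced by the spectral radius condition in which existence, uniqueness, and GES all follow. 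Your write-up is a self-contained rendering of that argument, with the particular choice $\mathbf{v} = (\eye - M)^{-1}\mathbf{1}$ standing in for the usual Perron--Frobenius-type vector satisfying $M\mathbf{v} \leq \gamma\mathbf{v}$.
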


Lemma~\ref{lemma:sufficient_GES_conditions} generalizes~\cite[Theorem
IV.4]{EN-JC:21-tacII} to thalamocortical networks and its proof
follows a similar line of arguments. The application of this result to
our setting results in, if
$ \rho(|\W_{i,i}^{11}| +
|\W_{i,i+1}^{11}|\bar{\F}_{i+1}|\W_{i+1,i}^{11}| +
|\W_{i,T}^{11}|\bar{\F}_T|\W_{T,i}^{11}|) < 1 $ for
$i \in \{2,\dots,a-1\}$,
$ \rho(|\W_{i,i}^{11}| +
|\W_{i,i+1}^{11}|\bar{\F}_{i+1}|\W_{i+1,i}^{11}|) < 1$, for
$i \in \{b,\dots,N\}$ and
$ \rho(|\W_{T}^{11}| + \sum_{i=b}^{N}
|\W_{T,i}^{11}|\bar{\F}_{i}|\W_{i,T}^{11}|) < 1$, then the
dynamics~\eqref{eq:reduced_dynamics_above_thal}-\eqref{eq:reduced_dynamics_below_thal}
are GES to an equilibria.

\begin{remark}\longthmtitle{Comparison with conditions for a strictly
    cortical hierarchical network} 
  Sufficient conditions for GES of the reduced-order dynamics of a
  strictly cortical network can be obtained from the above by allowing
  $\W_{i,T}^{11} = 0$ for all $i$, and then these conditions reduce to
  those found in~\cite{EN-JC:21-tacII}, as expected. These sufficient
  conditions to guarantee GES, for layers above the thalamus, are
  harder to satisfy in the thalamocortical network than in a cortical
  one. However, as the conditions are only sufficient, this does not
  mean that it is in fact more difficult to achieve GES of the
  reduced-order dynamics in the thalamocortical case, as the proof
  above does not explicitly invoke the inhibitory nature of the
  thalamus.  \oprocend
\end{remark}

\subsection{Selective Inhibition and Recruitment}
We are ready to illustrate how selective inhibition and recruitment
can be achieved in the hierarchical thalamocortical network
model. Here, we first formalize the concept mathematically and then
provide a feedforward-feedback control that achieves it.
Recall that selective inhibition corresponds to the task-irrelevant
components of the network converging to zero, and recruitment
corresponds to having the task-relevant components converge to an
equilibrium trajectory.  The timescale ratio between layers,
$\epsilon$, must approach zero to encode the separation of timescales
observed in the brain. As such, we require convergence of the
task-relevant components of the network to an equilibrium as this
ratio approaches zero.
Formally, selective inhibition and recruitment is achieved if the
following equations are satisfied for any $0 < t_1 < t_2 < \infty$:
first, for all layers $\mathcal{N}_i$, $i \in \{2,\dots,N\}\cup\{T\}$, it
holds that
\begin{subequations}\label{eq:selective-recruitment-inhibition}
  \begin{align}\label{eq:hier_selective_inhib}
    \text{[inhibition]:} & \quad \lim_{\epsilon \to 0} \sup_{t \in
      [t_1,t_2]} \norm{\x_i^0(t)} = 0;
    \intertext{second, for the top
      layer $\mathcal{N}_1$ in the hierarchy,}
    \label{eq:hier_selective_recr_top}
    \text{[driving layer]:} & \quad \lim_{\epsilon \to 0}
    \sup_{t \in [t_1,t_2]} \norm{\x_1^1(t) - \bar{\x}_1^1(t)} = 0;
    \intertext{for all layers $\{\mathcal{N}_i\}_{i=2}^a$ above the
      thalamus,}
    \label{eq:hier_selective_recr_above_thal}
    \text{[recruitment]:} & \quad \lim_{\epsilon \to 0} \sup_{t \in
      [t_1,t_2]} \Vert \x_i^1(t) \notag
    \\
    & \quad \quad - h_i^1(\W_{i,i-1}^{11}\x_{i-1}^1(t) + \c_i^1) \Vert =
    0;
    \intertext{for the thalamus layer,}
    \label{eq:hier_selective_recr_thal}
    \text{[recruitment]:} & \quad \lim_{\epsilon \to 0} \sup_{t \in
      [t_1,t_2]} \norm{\x_T^1(t) - x_{T_{e(a)}}} = 0;
    \intertext{and finally, for the layers $\{\mathcal{N}_i\}_{i=b}^N$
      below the thalamus,}
    \label{eq:hier_selective_recr_below_thal}
    \text{[recruitment]:} & \quad \lim_{\epsilon \to 0} \sup_{t \in
      [t_1,t_2]} \Vert\x_i^1(t) - h_i^1(\W_{i,i-1}^{1,\all}\x_{i-1}(t)
    \notag \\
    & \quad \qquad + \W_{i,T}^{1,\all}\x_T(t) + \c_i^1)\Vert = 0.
  \end{align}
\end{subequations}
Intuitively, we note from~\eqref{eq:selective-recruitment-inhibition}
that achieving selective inhibition and recruitment means that one can
make the error between the network trajectories and the equilibrium
trajectories arbitrarily small if the timescale ratio is small
enough. The following result shows that selective inhibition and
recruitment can be achieved in the hierarchical thalamocortical
network by means of a combination of feedforward and feedback control.

\begin{theorem}\longthmtitle{Selective inhibition and recruitment in  hierarchical
    thalamocortical networks with
  Feedforward-Feedback Control}\label{thrm:multilayer_LTR}
Consider an $N$-layer thalamocortical network as shown in
Fig.~\ref{fig:thalamocortical-networks}(a), with layer dynamics
given by~\eqref{eq:cortical_layer_dynamics}
and~\eqref{eq:thalamus_layer_dynamics}. Without loss of generality,
let $\tau_1 \gg \tau_2 \gg \dots \gg \tau_T \gg \dots \gg \tau_N$ and
$a,b \in \{1,\dots,N\}$ such that $\tau_a \gg \tau_T \gg \tau_b$, with
$b=a+1$. Assume the stability
conditions~\eqref{eq:reduced_dynamics_top}-\eqref{eq:reduced_dynamics_below_thal}
for the reduced-order subnetworks are satisfied.  Then, for $i \in
\{1,\dots,N\}\cup\{T\}$ and constants $\c_i \in \R^{n_i}$ and $\c_T \in
\R^{n_T}$, there exist control laws $\u_i(t) = \K_i\x_i(t) +
\bar{\u}_i(t)$, with $\K_i \in \R^{p_i \times n_i}$ and
$\map{\bar{\u}_i}{\Rpluseq}{\Rpluseq^{p_i}}$, such that the
closed-loop system achieves selective inhibition and
recruitment~\eqref{eq:selective-recruitment-inhibition}.
\end{theorem}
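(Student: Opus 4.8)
The plan is to decompose the argument into two coupled mechanisms: a control-design step that achieves inhibition by driving each task-irrelevant block $\x_i^0$ to zero, and a singular-perturbation step that yields recruitment of the task-relevant blocks $\x_i^1$ to the equilibrium trajectories prescribed by the equilibrium maps. Because the control enters only the task-irrelevant dynamics, as $\B_i = [(\B_i^0)^\top\ \zero]^\top$, the two mechanisms can be designed almost independently: the feedback-feedforward law handles inhibition, and once inhibition is in force the residual task-relevant dynamics coincide with the reduced-order models assumed GES in~\eqref{eq:reduced_dynamics_above_thal}--\eqref{eq:reduced_dynamics_below_thal}.

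For inhibition, I would exploit that $\B_i^0$ has full row rank with $p_i \geq r_i$, so it admits a right inverse $(\B_i^0)^\dagger$. Choosing the feedback gain $\K_i = -(\B_i^0)^\dagger \W_{i,i}^{0,\all}$ gives $\B_i^0\K_i = -\W_{i,i}^{0,\all}$, which exactly cancels the intra-layer excitation of the task-irrelevant nodes. The task-irrelevant dynamics then reduce to $\tau_i\dot{\x}_i^0 = -\x_i^0 + [\,\w_i^0 + \B_i^0\bar{\u}_i + \c_i^0\,]_\zero^{\m_i^0}$, where $\w_i^0$ collects the (bounded) inter-layer inputs. Using the boundedness of all layer trajectories---guaranteed for the top layer by~\eqref{eq:reduced_dynamics_top} and for every lower layer by the assumed GES---I would pick the feedforward $\bar{\u}_i$ so that $\B_i^0\bar{\u}_i$ dominates $\w_i^0 + \c_i^0$ componentwise, forcing the threshold argument to be nonpositive. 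The rectification then yields $\tau_i\dot{\x}_i^0 = -\x_i^0$, so $\x_i^0$ decays exponentially to zero, establishing~\eqref{eq:hier_selective_inhib}.

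With inhibition active, $\x_i^0 \equiv 0$ and the surviving task-relevant dynamics match exactly the reduced-order systems, whose GES is hypothesized (and can be certified through Lemma~\ref{lemma:sufficient_GES_conditions}). To pass from GES of each isolated reduced-order layer to the coupled interconnection and obtain the recruitment limits~\eqref{eq:hier_selective_recr_top}--\eqref{eq:hier_selective_recr_below_thal}, I would invoke the non-smooth Tikhonov theorem of~\cite[Proposition 1]{VV:97} recursively along the timescale ordering $\tau_1 \gg \cdots \gg \tau_T \gg \cdots \gg \tau_N$. Starting from the fastest layer $\mathcal{N}_N$, as $\epsilon_N \to 0$ its state collapses onto $h_N^1$; this equilibrium feeds the next-slower layer as a static input, and iterating upward---through the thalamus, whose equilibrium map~\eqref{eq:thal_eq_map} aggregates all faster layers, and through the special layer $\mathcal{N}_a$ with map~\eqref{eq:layer_a_eq_map}---produces the nested reduced models. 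The five hypotheses flagged in the footnote are verified using the piecewise-affinity and global Lipschitzness of all equilibrium maps from Lemma~\ref{lemma:piecewise_affine_eq_maps}, together with the uniform exponential stability supplied by the GES assumptions.

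The main obstacle is the recursive, layer-by-layer application of the singular-perturbation theorem in the presence of non-smoothness and the irregular position of the thalamus. Because the thalamus couples to every cortical layer, the equilibrium map entering its reduced dynamics depends on all faster layers simultaneously, and the layer $\mathcal{N}_a$ directly above it inherits a mixed dependence on both a sub-thalamic and a supra-thalamic map. Keeping the induced constants, Lipschitz bounds, and GES rates uniform across all reductions---so that the limits in~\eqref{eq:selective-recruitment-inhibition} hold simultaneously as the full ratio vector $\epsilon \to 0$---is the delicate bookkeeping that the argument must control.
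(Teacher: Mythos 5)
Your proposal follows essentially the same route as the paper's proof: a layer-by-layer feedback--feedforward design that drives the task-irrelevant blocks to zero, followed by recursive application of the non-smooth Tikhonov result~\cite[Proposition 1]{VV:97} from the fastest layer up through the thalamus and the special layer $\mathcal{N}_a$, using the piecewise-affinity and global Lipschitzness of the equilibrium maps (Lemma~\ref{lemma:piecewise_affine_eq_maps}) and the GES assumptions to verify its hypotheses, then combining the nested limits. The only differences are bookkeeping-level: the paper's feedback gains additionally dominate the state-dependent feedthrough of the faster layers' equilibrium maps via the Lipschitz bounds $\bar{\F}_{i+1}$, cf.~\eqref{eq:multilayer_below_thalamus_control}, and it invokes the cascade result~\cite[Lemma A.1]{EN-JC:21-tacI} to conclude GES of each full fast layer before applying Tikhonov, whereas you obtain the same effect through exact cancellation plus worst-case feedforward domination using $\x_i \leq \m_i$---both choices are sound.
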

\begin{proof}
  We prove the result by constructing a control and iteratively
  applying the generalization of Tikhonov's theorem
  from~\cite{VV:97}. Throughout the proof, we make use of the
  following notation. For $i \leq a$, let $\x_{1:i}
  = \begin{bmatrix}\x_1^\top & \dots & \x_i^\top \end{bmatrix}^\top$
  and, for $i \geq b$, let $\x_{1:T:i} = \begin{bmatrix} \x_1^\top
    \dots \x_a^\top & \x_T^\top & \x_b^\top & \dots &
    \x_i^\top \end{bmatrix}^\top$. We first define the control for
  layer $\mathcal{N}_N$. Let $\K_N$ and $\bar{\u}_N(t)$ be such that
  \begin{subequations}\label{eq:multilayer_layer_N_control}
    \begin{align}
      \B_N^0\K_N &\leq
      -\W_{N,N}^{0,\all}, \label{eq:multilayer_layer_N_feedback}
      \\
      \B_N^0\bar{\u}_N(t) & \leq -\W_{N,N-1}^{0,\all} \x_{N-1}(t) -
      \W_{N,T}^{0,\all}\x_T(t) -
      \c_N^0. \label{eq:multilayer_layer_N_feedforward}
    \end{align}
  \end{subequations}
  The inequalities~\eqref{eq:multilayer_layer_N_control} can be
  satisfied due to our assumption that the matrices $\B_i^0$ have full
  rank and $p_i \geq r_i$ for all $i \in \{1,\dots,N\}\cup\{T\}$.
  Substituting~\eqref{eq:multilayer_layer_N_control} into the dynamics,
  \begin{align*}
    \tau_1\dot{\x}_1 &= -\x_1 + [\W_{1,1}\x_1 + \W_{1,2}\x_2 +
    \W_{1,T}\x_T + \c_1]_\zero^{\m_1},
    \\
    &\vdots
    \\
    \tau_T\dot{\x}_T &= -\x_T + \bigg[\W_T\x_T + \sum_{j=1}^N
    \W_{T,j}\x_j + \B_T\u_T  + \c_T \bigg]_\zero^{\m_T}, 
    \\
    &\vdots
    \\
    \epsilon_N\tau_{N-1}\dot{\x}_N^0 &= -\x_N^0, \\
    \epsilon_{N}\tau_{N-1}\dot{\x}_N^1 &= -\x_N^1 +
    [\W_{N,N}^{1,\all}\x_N + \W_{N,N-1}^{1,\all}\x_{N-1}
    \\
    &\qquad + \W_{N,T}^{1,\all}\x_T + \c_N^1]_{\zero}^{\m_1^1}.
  \end{align*}
  Taking $\epsilon_N \to 0$ then provides a separation of
  timescales between $\x_N$ and $\x_{1:T:N-1}$.  Note that, by
  assumption, the reduced-order
  dynamics~\eqref{eq:reduced_dynamics_below_thal} is GES.  Using then
  the fact that the cascaded interconnection of a GES system with an
  exponentially vanishing system is also GES, cf.~\cite[Lemma
  A.1]{EN-JC:21-tacI}, we deduce that, for any constants $\x_{N-1}$
  and $\x_T$, $\x_N$ is GES to
  $(\zero_{r_N},h_N^1(\W_{N,N-1}^{1,\all}\x_{N-1} +
  \W_{N,T}^{1,\all}\x_T + \c_N^1))$. Recalling that, by
  Lemma~\ref{lemma:piecewise_affine_eq_maps} and~\cite[Lemma
  IV.2]{EN-JC:21-tacII}, the equilibrium maps $h_i^1$ are globally
  Lipschitz for all $i \in \{1,\dots,N\}\cup\{T\}$, and noting that the
  entire network is Lipschitz due to the Lipschitzness of the
  linear-threshold function $[\cdot]_\zero^\m$, we can
  apply~\cite[Proposition 1]{VV:97}, giving for $0 < t_1 < t_2 <
  \infty$,
  \begin{align}\label{eq:limits_bottom_layer}
    & \lim_{\epsilon_1 \to 0} \sup_{[t_1,t_2]} \norm{\x_N^0(t)} = 0,
    \notag \\
    & \lim_{\epsilon_1 \to 0} \sup_{[t_1,t_2]} \big\Vert\x_N^1(t) -
      h_N^1(\W_{N,N-1}^{1,\all}\x_{N-1}(t) \notag \\ 
      & \qquad + \W_{N,T}^{1,\all}\x_T(t) +
      \c_N^1)\big\Vert =0,
    \notag \\
    & \lim_{\epsilon_1 \to 0} \sup_{[t_1,t_2]} \norm{\x_{1:T:N-1} -
      \x_{1:T:N-1}^{(1)}} = 0,
  \end{align}
  where $\x_{1:T:N-1}^{(1)}$ represents the first-step reduced-order
  model coming from replacing $\x_N$ by its equilibrium value
  $(\zero_{r_N},h_N^1(\W_{N,N-1}^{1,\all}\x_{N-1} +
  \W_{N,T}^{1,\all}\x_T + \c_N^1))$. We continue the proof by
  iterating the above process for each network layer, utilizing
  constructed control laws and applying~\cite[Proposition 1]{VV:97} to
  the corresponding reduced-order models built from substitution of
  the equilibrium values.

  We now construct control laws for layers $\mathcal{N}_i$ with $i \in
  \{b,\dots,N-1 \}$. We choose $\K_i$ and $\bar{\u}_i(t)$ such that
  \begin{subequations}\label{eq:multilayer_below_thalamus_control}
    \begin{align}
      \B_i^0\K_i &\leq -|\W_{i,i}^{0,\all}| -
      |\W_{i,i+1}^{01}|\bar{\F}_{i+1}|\W_{i+1,i}^{1,\all}|, 
      \label{eq:below_thalamus_multilayer_feedback}
      \\
      \B_i^0\bar{\u}_i(t) &\leq -\W_{i,i-1}^{0,\all}\x_{i-1}(t) -
      \W_{i,T}^{0,\all}\x_T(t) - \c_i^0 \cr &\qquad-
      |\W_{i,i+1}^{01}|\bar{\F}_{i+1}|\W_{i+1,T}^{1,\all}\x_T(t) +
      \c_{i+1}^1|, \label{eq:below_thalamus_multilayer_feedforward}
    \end{align}
  \end{subequations}
  in which $\bar{\F}_i \in \R^{(n_i-r_i)\times(n_i -r_i)}$ is the
  entry-wise maximal gain of the equilibrium map $h_i^1(\cdot)$ as
  defined in Lemma~\ref{lemma:sufficient_GES_conditions}.
  We use the control law~\eqref{eq:multilayer_below_thalamus_control} to construct the
  reduced-order model, consider the timescale separation by letting
  $\epsilon_i \to 0$ for $i \in \{b,\dots,N-1 \}$, and finally
  apply~\cite[Proposition 1]{VV:97} to obtain

  \begin{align*}
    & \lim_{\epsilon \to 0}\sup_{t \in [t_1,t_2]}
    \norm{\x_i^{(N-i)^0}(t)} = 0,
    \\
    & \lim_{\epsilon \to 0}\sup_{t \in [t_1,t_2]}
    \Vert\x_i^{(N-i)^1}(t) -
    h_i^1(\W_{i,i-1}^{1,\all}\x_{i-1}^{(N-i)}(t)
    \\
    & \qquad + \W_{i,T}^{1,\all}\x_T^{(N-i)}(t)+ \c_i^1)\Vert = 0,
    \\
    & \lim_{\epsilon \to 0}\sup_{t \in [t_1,t_2]}
    \norm{\x_{1:T:i-1}^{(N-i)}(t) - \x_{1:T:i-1}^{(N-i+1)}(t)} = 0,
  \end{align*}
  for all $i \in \{b,\dots,N-1\}$. For the thalamus layer, we note
  that provided the initial conditions lie in $[\zero,\m_i]$ for all
  $i \in \{1,\dots,N\}\cup\{T\}$, by the properties of the linear-threshold
  dynamics, we have that $\x_i(t) \leq \m_i$ for all $t \geq 0$.
  Utilizing these bounds, we define the control for the thalamus layer
  such that it satisfies
  \begin{subequations}\label{eq:multilayer_thalamus_control}
    \begin{align}
      \B_T^0\K_T &\leq -|\W_T^{0,\all}| - \sum_{j=b}^N
      |\W_{T,j}^{01}|\bar{\F}_j|\W_{j,T}^{0,\all}|, \label{eq:multilayer_thalamus_feedback} 
      \\
      \B_T^0\bar{\u}_T(t) &\leq \sum_{j=1}^1 \W_{T,j}^{0,\all}\x_j(t) -
      \c_T^1 \cr &\quad -
      |\W_{T,b}^{0,\all}|\bar{\F}_b|\W_{b,a}^{1,\all}\x_a(t) + \c_b^1|
      \cr &\quad- \sum_{j=b+1}^{N}
      |\W_{T,j}^{01}|\bar{F}_j|\W_{j,j-1}^{1,\all}\m_{j-1} +
      \c_j^1|. \label{eq:multilayer_thalamus_feedforward}
  \end{align}
  \end{subequations}

  Then, after constructing the reduced-order model using the control
  laws~\eqref{eq:multilayer_thalamus_feedback}
  and~\eqref{eq:multilayer_thalamus_feedforward} and letting
  $\epsilon_T \to 0$ to create the timescale separation, we again
  apply~\cite[Proposition 1]{VV:97} to get
  \begin{align*}
    & \lim_{\epsilon \to 0} \sup_{t \in [t_1,t_2]}
    \norm{\x_T^{(T)^0}(t)} = 0,
    \\
    & \lim_{\epsilon \to 0} \sup_{t \in [t_1,t_2]}
    \norm{\x_T^{(T)^1}(t) - h_T^1(\sum_{j=1}^a
      \W_{T,j}^{1,\all}\x_j^{(T)}(t) + \c_T^1)} = 0,
    \\
    & \lim_{\epsilon \to 0} \sup_{t \in [t_1,t_2]}
    \norm{\x_{1:a}^{(T)}(t) - \x_{1:a}^{(N-b+1)}(t)} = 0.
  \end{align*}
  What remains is to consider the layers above the thalamus. These
  layers maintain the same form of control law except for the layer
  immediately above the thalamus. For layer $\mathcal{N}_a$ we
  define terms $\K_a$ and $\bar{\u}_a(t)$ such that
  \begin{subequations}\label{eq:multilayer_layer_a_control}
    \begin{align} 
      \B_a^0\K_a &\leq -|\W_{a,a}^{0,\all}| -
      |\W_{a,b}^{01}|\bar{\F}_{b}|\W_{b,a}^{1,\all}|,
      \\
      \B_a^0\bar{\u}_a(t) &\leq -\W_{a,a-1}^{0,\all}\x_{a-1}(t) -
      |\W_{a,b}^{01}|\bar{\F}_{b}|\W_{b,T}^{1,\all}\x_{T_{e(a)}}+\c_{b}^1|.  
    \end{align}
  \end{subequations}
  Now, for layers $\mathcal{N}_i$, $i \in \{2,\dots,a-1 \}$, we let
  the controls $\K_i$ and $\bar{\u}_i(t)$ be such that the following
  hold
  \begin{subequations}\label{eq:multilayer_above_thalamus_control}
    \begin{align}
      \B_i^0\K_i &\leq -|\W_{i,i}^{0,\all}| -
      |\W_{i,i+1}^{01}|\bar{\F}_{i+1}|\W_{i+1,i}^{1,\all}|,
      \\ 
      \B_i^0\bar{\u}_i(t) &\leq -\W_{i,i-1}^{0,\all}\x_{i-1}(t) -
      |\W_{i,i+1}^{01}|\bar{\F}_{i+1}|\c_{i+1}^1| , 
    \end{align}
  \end{subequations}

  Once the reduced-order models are constructed, letting $\epsilon_i
  \to 0$ for $i \in \{2,\dots,a\}$, applying~\cite[Proposition
  1]{VV:97} gives
  \begin{align*}
    &\lim_{\epsilon \to 0} \sup_{t \in [t_1,t_2]}
    \norm{\x_i^{(N-i)^0}(t)} = 0,
    \\
    &\lim_{\epsilon \to 0} \sup_{t \in [t_2,t_2]}
    \norm{\x_i^{(N-i)^1}(t) -
      h_i^1(W_{i,i-1}^{1,\all}\x_{i-1}^{(N-i)}(t) + \c_i^1 )} = 0,
    \\
    &\lim_{\epsilon \to 0}\sup_{t \in [t_1,t_2]}
    \norm{\x_{1:i-1}^{(N-i)}(t) - \x_{1:i-1}^{(N-i+1)}(t)} = 0.
  \end{align*}
  for $i \in \{2,\dots,a\}$. The
  equations~\eqref{eq:selective-recruitment-inhibition} for selective
  inhibition and recruitment, are then obtained through repeated
  application of the triangle inequality, completing the proof.
\end{proof}

\begin{figure}[htb]
  \centering
  \includegraphics[scale=0.18]{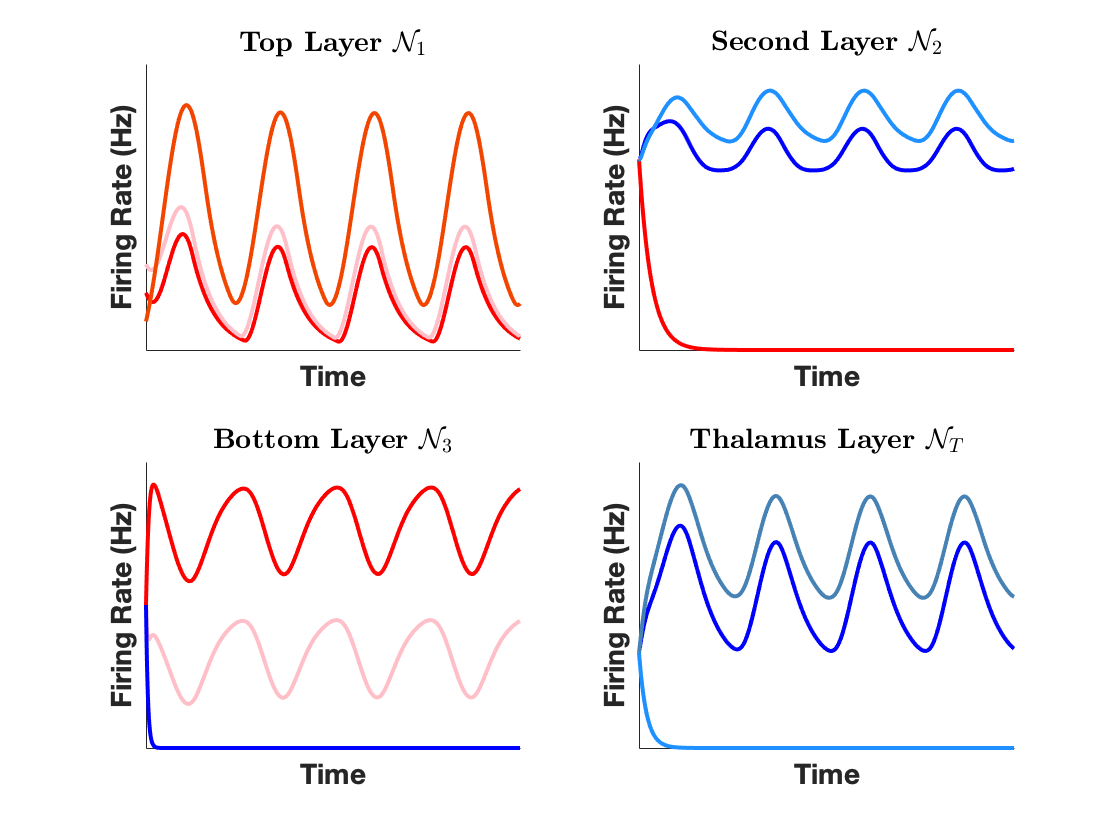}
  \caption{Trajectories of a three-layer hierarchical thalamocortical
    network under a periodic input.  Red and blue lines correspond
    with excitatory and inhibitory nodes, resp. Each layer
    consists of three nodes with the top layer in the network
    containing only excitatory nodes, the thalamus being strictly
    inhibitory and layers two and three having both excitatory and
    inhibitory nodes. Using a control law designed through
    Theorem~\ref{thrm:multilayer_LTR}, in each layer other than the
    top layer, nodes are able to be either selectively inhibited to
    zero or recruited into a periodic equilibrium trajectory.
    Timescale ratios are $\epsilon_2 = 0.54$, $\epsilon_3 = 0.15$ and
    $\epsilon_T = 0.61$.}\label{fig:sample_trajectories}
  \vspace*{-1ex}
\end{figure}

Fig.~\ref{fig:sample_trajectories} illustrates
Theorem~\ref{thrm:multilayer_LTR}. While the result is proven for the
case in which the thalamus has a timescale inside the hierarchy, it
still holds when the thalamus timescale is at the top or bottom of the
hierarchy. In these cases, the proof method remains the same, with the
appropriate modifications to the inequalities derived
in~\eqref{eq:multilayer_layer_N_control}-\eqref{eq:multilayer_above_thalamus_control}.

\begin{remark}\longthmtitle{Comparison with strictly cortical
    networks}
  Regarding selective inhibition and recruitment,
  Theorem~\ref{thrm:multilayer_LTR} is to multilayer thalamocortical
  networks what~\cite[Theorem IV.3]{EN-JC:21-tacII} is to multilayer
  cortical networks.  Despite the analytical similarities, the
  consideration of the thalamus provides a significant generalization
  from a biological perspective, as transthalamic connections exist in
  most brain networks~\cite{SMS:12}.  From a technical viewpoint, the
  addition of the thalamus, while only adding a layer, significantly
  complicates the analysis due to its connection with all of the
  cortical layers. These connections result in every layer having
  connections from timescales not simply immediately above or below it
  in the hierarchy, which impacts the determination of convergence to
  equilibrium values for layers above the thalamus. Finally,the control
  laws~\eqref{eq:multilayer_layer_N_control}-\eqref{eq:multilayer_above_thalamus_control}
  allow for smaller magnitude sufficient controls than for the
  strictly cortical networks determined in~\cite{EN-JC:21-tacII} due
  to the inhibitory properties of the thalamic connection
  matrices~$\W_{i,T}$, cf Section~\ref{sec:examples}.  \oprocend
\end{remark}

\section{Star-connected Thalamocortical
  Networks}\label{sec:star_networks}

In this section we consider star-connected thalamocortical networks,
cf. Fig.~\ref{fig:thalamocortical-networks}(b), where the cortical
regions are each connected only to the thalamus and the dynamics are
governed by~\eqref{eq:input_star_dynamics}-\eqref{eq:star_dynamics}.
In this topology, there is no direct relationship between the
timescales of each layer and as such, no hierarchical structure.
Without timescale separation, the specifics of selective inhibition
and recruitment, both in terms of equilibria and stability criteria,
differ from the hierarchical case.

\subsection{Equilibria and Stability Conditions}
With the lack of timescale separation, the decomposition described in
Section~\ref{sec:eq_maps} of the network equilibrium map as a
collection of equilibrium maps for each layer, with the state of
layers higher in the hierarchy represented by a constant input, no
longer holds.  As such, equilibrium values must be determined
concurrently for all the layers. Since the task-irrelevant components
get selectively inhibited to zero, the equilibrium for the
task-relevant components is given by the solution $\x_i^{1^*}$,
$i \in \until{N} \cup \{T\}$ to the system of equations:
\begin{align}\label{eq:equilibrium_sys_equations_star}
  \x_i^1 &= [\W_{i,i}^{1,\all}\x_i + \W_{i,T}^{1,\all}\x_T +
  \c_i^1]_\zero^{\m_i^1}, \qquad i = 1,\dots,N, \notag 
  \\
  \x_T^1 &= [\W_T\x_T + \sum_{i=1}^N \W_{T,i}\x_i +
  \c_T^1]_0^{\m_T^1}.
\end{align}

The task-relevant equilibrium to which the system converges is
dependent upon $\c_i \in \R^{n_i - r_i}$, $i \in \{1,\dots,N\}$ and
$\c_T \in \R^{n_T-r_T}$, and so we represent it by
$\x_i^{1^*}(\c_1,\dots,\c_N,\c_T)$.  In keeping with the role of layer
$\mathcal{N}_1$ as driving the selective recruitment in the other
layers, rather than being recruited itself, we consider in what
follows an input signal $\map{\c_1}{\Rpluseq}{\R^{n_i-r_i}}$, rather
than a constant, that gives rise to an equilibrium trajectory
$\x_i^{1^*}(\c_1(t),\dots,\c_N,\c_T)$ for the dynamics.

As per the description of the star-connected thalamocortical network,
cf Section~\ref{sec:problem-setup}, only the task-irrelevant component
of the dynamics is directly controlled, and as such the ability to
achieve selective inhibition and recruitment is dependent on the
stability properties of the task-relevant components. To ensure this,
we employ below the fact~\cite[Theorem IV.8]{EN-JC:21-tacI} that, for
a generic linear-threshold network mode
$\tau\dot{\x} = -\x + [\W\x + \c]_\zero^\m$, the condition
$\rho(|\W|) < 1$ is sufficient to ensure that, for all $\c \in \R^n$,
the dynamics is GES to an equilibrium.

\subsection{Selective Inhibition and Recruitment}
We are ready to formalize selective inhibition and recruitment for
star-connected networks and provide conditions for its achievement.
We recall that the subnetwork $\mathcal{N}_1$ corresponds with a
subcortical region applying a sensory input signal to the thalamus to
be relayed to the cortical regions. As such, we do not inhibit any
components in this subnetwork and instead assume that it is stable to
a trajectory $\bar{\x}_1(\c_1(t))$ dictated by its own input
signal. For the remaining layers, we wish to inhibit the
task-irrelevant components to zero and recruit the task-relevant
components to the equilibria trajectory
$\x_i^{1^*}(\c_1(t),\dots,\c_N,\c_T)$,
$i \in \{2,\dots,N\} \cup \{T\}$. This can be formalized to selective
inhibition and recruitment is achieved if for the input layer
$\mathcal{N}_1$,
\begin{subequations}\label{eq:selective_inh_rec_star}
  \begin{align} 
    \text{[driving layer]:} & \quad \lim_{t \to \infty} \norm{\x_1(t)
      - \bar{\x}_1(\c_1(t))} = 0; 
    \intertext{and for all layers $\{\mathcal{N}_i\}_{i=2}^{N}$ and
      $\mathcal{N}_T$,} \text{[inhibition]:} & \quad \lim_{t \to
      \infty} \norm{\x_i^0(t)} = 0; \label{eq:selective_inhib_star}
    \\
    \text{[recruitment]:} & \quad \lim_{t \to \infty} \norm{\x_i^1(t)
      - \x_i^{1^*}(\c_1(t),\dots,\c_N,\c_T)} = 0.
    \label{eq:selective_recr_star} %
  \end{align}
\end{subequations}
We also employ a weaker notion, referred to as $\epsilon$-selective
inhibition and recruitment, which is met if there exists $t^*$ such
that the functions in~\eqref{eq:selective_inh_rec_star} are all upper
bounded by $\epsilon>0$ for $t > t^*$.
For convenience, we also introduce the notation:
\begin{align}\label{eq:Wbar_matrix}
  \bar{\W}^{11} &= \begin{bmatrix}
    \W_{1,1} & \zero & \dots & \zero & \W_{1,T}^{11} \\
    \zero & \W_{2,2}^{11} & \dots & \zero & \W_{2,T}^{11} \\
    \vdots & \dots & \ddots & \vdots & \vdots \\
    \zero & \zero & \dots & \W_{N,N}^{11} & \W_{N,T}^{11} \\
    \W_{T,1}^{11} & \W_{T,2}^{11} & \dots & \W_{T,N}^{11} &
    \W_{T}^{11}
  \end{bmatrix}.
\end{align}
Note the Schur decomposition
$\bar{\W}^{11} = \mathbf{Q}^\top(\mathbf{D}_{\bar{\W}^{11}} +
\mathbf{N}_{\bar{\W}^{11}})\mathbf{Q}$, where $\mathbf{Q}$ is unitary,
$\mathbf{D}_{\bar{\W}^{11}}$ is diagonal, and
$\mathbf{N}_{\bar{\W}^{11}}$ is upper triangular with a zero
diagonal~\cite{KEC:86}.  The next result establishes conditions to
achieve selective inhibition and recruitment in star-connected systems
without a hierarchy of timescales.

\begin{theorem}\longthmtitle{Selective inhibition and recruitment of
   star-connected
    networks}\label{thrm:star_uniform_timescale}
  Consider an $N$-layer star-connected thalamocortical network as
  shown in Fig.~\ref{fig:thalamocortical-networks}(b), with layer
  dynamics given by~\eqref{eq:input_star_dynamics}
  and~\eqref{eq:star_dynamics}. Suppose the following hold for all
  values of $\c_i \in \R^{n_i}$, $i \in \until{N}$, and
  $\c_T \in \R^{n_T}$:
  \begin{enumerate}[(i)]
  \item The input layer $\mathcal{N}_1$ has no nodes to be inhibited,
    $\rho(|\W_{1,1}|) < 1$, and 
    the input $\c_1(t)$ lies in a compact set and has a bounded rate derivative;
  \item For each $i\in \{2,\dots,N\}\cup\{T\}$, the matrix $\W_{i,i}^{11}$ satisfies
    $\rho(|\W_{i,i}^{11}|) \leq \alpha_i$, with $\alpha_i < 1$;
  \item If
    $ \rho\big(\sum_{i=1}^N |\W_{i,T}^{11}\W_{T,i}^{11}|\big) \neq 0$,
    then $\alpha + \max(\delta,\delta^{1/p}) < 1$, where $p$ is the
    dimension of $\mathbf{N}_{\bar{\W}^{11}}$ and
    \begin{align*}
      \alpha
      & = \max_{i\in \{2,\dots,N\}\cup\{T\}} \{\alpha_i\}, \qquad 
        \delta = \gamma\sum_{j=1}^{p-1}
        \norm{\mathbf{N}_{\bar{\W}^{11}}}^j,
      \\
      \gamma
      & = \max \{\sum_{i=1}^{N-1}
        \W_{i,T}^{11}\W_{i,T}^{11^\top},\sum_{i=1}^{N-1}
        \W_{T,i}^{11}\W_{T,i}^{11^\top} \} .
    \end{align*}

  \end{enumerate}
  Then, there exist control laws
  $\u_i(t) = \K_i\x_i(t) + \bar{\u}_i(t)$, with
  $\K_i \in \R^{p_i \times n_i}$ and
  $\map{\bar{\u}_i}{\Rpluseq}{\Rpluseq^{p_i}}$, and $\epsilon > 0$
  such that the cortical and thalamic regions within the closed-loop
  system achieve $\epsilon-$selective inhibition and
  recruitment. Furthermore, if $\norm{\dot{\c}_1(t)} \to 0$ as
  $t \to \infty$, then the network achieves selective inhibition and
  recruitment~\eqref{eq:selective_inh_rec_star}.
  \end{theorem}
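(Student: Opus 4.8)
The plan is to proceed in three stages: design a feedback--feedforward control that exponentially inhibits the task-irrelevant nodes, establish that the residual coupled task-relevant dynamics are GES to the unique equilibrium of \eqref{eq:equilibrium_sys_equations_star} for each frozen value of $\c_1$, and finally invoke a slowly-varying-systems argument to track the equilibrium trajectory generated by the time-varying input $\c_1(t)$. For the first stage, observe that for each $i \in \{2,\dots,N\}\cup\{T\}$ the task-irrelevant subdynamics read $\tau_i\dot{\x}_i^0 = -\x_i^0 + [\W_{i,i}^{0,\all}\x_i + \W_{i,T}^{0,\all}\x_T + \B_i^0\u_i + \c_i^0]_\zero^{\m_i^0}$. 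Writing $\u_i = \K_i\x_i + \bar{\u}_i(t)$ and using that $\B_i^0$ has full rank with $p_i \ge r_i$, I would choose $\K_i$ with $\B_i^0\K_i \le -|\W_{i,i}^{0,\all}|$ and $\bar{\u}_i$ with $\B_i^0\bar{\u}_i(t) \le -\c_i^0$; since the states are nonnegative and the thalamic input is inhibitory ($\W_{i,T}^{0,\all}\x_T \le \zero$ because $\W_{i,T} \le 0$), the bracket argument becomes nonpositive along trajectories, so $\tau_i\dot{\x}_i^0 = -\x_i^0$ and $\x_i^0 \to \zero$ exponentially.

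Substituting the vanishing $\x_i^0$ into the task-relevant equations reduces the problem to the coupled linear-threshold network with connectivity matrix $\bar{\W}^{11}$ of \eqref{eq:Wbar_matrix}, driven by exponentially decaying residuals; the cascade result \cite[Lemma A.1]{EN-JC:21-tacI} lets me discard the latter, so it suffices to analyze the nominal coupled task-relevant system. The crux is to show this system is GES to the unique solution $\x_i^{1^*}$ of \eqref{eq:equilibrium_sys_equations_star} for each frozen $\c_1$. When $\rho(\sum_{i=1}^N |\W_{i,T}^{11}\W_{T,i}^{11}|) = 0$ the cortical--thalamic coupling is absent in the relevant components, the system block-decouples, and each block is GES by conditions (i)--(ii) together with \cite[Theorem IV.8]{EN-JC:21-tacI}.

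In the genuinely coupled case I would exploit that the threshold map $[\cdot]_\zero^\m$ is $1$-Lipschitz and pass to the unitary Schur coordinates $\bar{\W}^{11} = \mathbf{Q}^\top(\mathbf{D}_{\bar{\W}^{11}} + \mathbf{N}_{\bar{\W}^{11}})\mathbf{Q}$, in which the $2$-norm is preserved. The goal is a contraction estimate whose leading term is the block-wise self-contraction $\alpha = \max_i \alpha_i$ furnished by the self-connection radii in condition (ii), and whose correction collects the interlayer coupling through the strictly upper-triangular nilpotent part $\mathbf{N}_{\bar{\W}^{11}}$. Bounding the coupling magnitude by $\gamma$ and the powers of the index-$p$ nilpotent matrix by $\sum_{j=1}^{p-1}\norm{\mathbf{N}_{\bar{\W}^{11}}}^j$ yields the correction $\delta$, with the dichotomy $\max(\delta,\delta^{1/p})$ arising from whether this geometric-type sum is governed by its first or its last term. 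Condition (iii), $\alpha + \max(\delta,\delta^{1/p}) < 1$, then renders the overall contraction factor strictly less than one, giving GES to a unique equilibrium and, in particular, existence and uniqueness of $\x_i^{1^*}$.

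Finally, to accommodate the time-varying input I would treat $\c_1(t)$ as a slowly-varying parameter. Since $\c_1(t)$ ranges in a compact set with bounded derivative (condition (i)) and the frozen-parameter coupled system is uniformly exponentially stable by the preceding step, the generalization of slowly-varying-systems stability to the exponentially stable setting keeps the trajectory within an $\epsilon$-tube of the equilibrium trajectory $\x_i^{1^*}(\c_1(t),\dots,\c_N,\c_T)$, with $\epsilon$ proportional to $\sup_t \norm{\dot{\c}_1(t)}$; since $r_1 = 0$, the driving-layer requirement in \eqref{eq:selective_inh_rec_star} is exactly the first-block component of this coupled tracking. Combined with the exponential inhibition of the task-irrelevant nodes, this establishes $\epsilon$-selective inhibition and recruitment, and when $\norm{\dot{\c}_1(t)} \to 0$ the tube radius vanishes asymptotically, upgrading the conclusion to full selective inhibition and recruitment. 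I expect the main obstacle to be the Schur-coordinate contraction estimate: unlike the hierarchical setting, no timescale separation is available to decouple the layers, so the non-normal, bordered block-diagonal structure of $\bar{\W}^{11}$ must be handled directly, cleanly separating the eigenvalue (diagonal) contribution from the non-normality (nilpotent) contribution while remaining compatible with the nonsmooth threshold nonlinearity.
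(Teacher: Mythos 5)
Your overall skeleton---inhibit the task-irrelevant nodes with feedback--feedforward control, use the cascade result \cite[Lemma A.1]{EN-JC:21-tacI} to reduce to the coupled task-relevant system with matrix $\bar{\W}^{11}$ of \eqref{eq:Wbar_matrix}, prove GES of the frozen-input system, and then treat $\c_1(t)$ as a slowly varying parameter---coincides with the paper's proof; the paper makes your last step precise by combining the converse Lyapunov result (Theorem~\ref{thrm:converse_lyap}) with Theorem~\ref{thrm:slowly_varying_stability}. Your control design differs only cosmetically (you exploit $\W_{i,T}\leq 0$ to leave the thalamic term uncancelled rather than cancelling it by feedforward as in \eqref{eq:control_cortical_feedforward_star}), and that is fine. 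However, there is a genuine gap in the crux step, which you yourself flag as the ``main obstacle'': establishing GES of the frozen dynamics $\dot{\mathbf{y}} = -\mathbf{y} + [\bar{\W}^{11}\mathbf{y}+\c]_\zero^{\bar{\m}}$ in the genuinely coupled case. Your plan---a direct contraction estimate in Schur coordinates with ``overall contraction factor'' $\alpha + \max(\delta,\delta^{1/p})$---conflates a spectral-radius bound with an operator-norm bound. The quantity $\alpha+\max(\delta,\delta^{1/p})$ is exactly the eigenvalue perturbation bound of Chu's generalization of the Bauer--Fike theorem \cite[Theorem 2]{KEC:86}: viewing $\bar{\W}^{11}$ as its block-diagonal part (spectral radius at most $\alpha$ by condition (ii)) plus the cortico--thalamic coupling as a perturbation (measured through $\gamma$ and the nilpotent Schur part), it controls how far eigenvalues can move, i.e., it yields $\rho(\bar{\W}^{11}) < 1$. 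It does \emph{not} bound $\norm{\bar{\W}^{11}}$: the Schur change of coordinates is unitary, so it cannot remove non-normality, and $\norm{\mathbf{D}_{\bar{\W}^{11}}+\mathbf{N}_{\bar{\W}^{11}}}$ can exceed one even when all eigenvalues are arbitrarily small. Hence no $2$-norm contraction for the $1$-Lipschitz threshold map follows from conditions (ii)--(iii), and the estimate you hope for is false in general for non-normal matrices.

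The missing ingredient is the two-step substitute the paper uses: first deduce the spectral condition on $\bar{\W}^{11}$ via \cite[Theorem 2]{KEC:86}, and then invoke the dedicated stability theorem for linear-threshold dynamics, \cite[Theorem IV.8]{EN-JC:21-tacI}, to pass from that spectral condition to GES of the nonsmooth switched dynamics (this passage is precisely the nontrivial content that a bare contraction argument cannot supply); only then do the cascade lemma and the appendix theorems finish the proof. A secondary imprecision: in the degenerate case $\rho\big(\sum_{i}|\W_{i,T}^{11}\W_{T,i}^{11}|\big)=0$ you claim the system ``block-decouples,'' but vanishing loop products also allow one-directional coupling (e.g., $\W_{T,i}^{11}\neq\zero$ with $\W_{i,T}^{11}=\zero$), which gives a cascade rather than decoupled blocks; the conclusion survives via the same cascade lemma, but the reasoning as stated is inaccurate.
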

\begin{proof}
  First, for cortical region $\mathcal{N}_i$, $i \in \{2,\dots,N\}$,
  define the control laws
  $\u_i(t) = \B_i\K_i\x_i(t) + \B_i\bar{\u}(t)$ such that
  \begin{subequations}\label{eq:control_cortical_star}
    \begin{align} 
      \B_i\K_i &\leq -\W_{i,i}^{0,\all}, \label{eq:control_cortical_feedback_star} \\
      \B_i\bar{\u}_i(t) &\leq -\W_{i,T}^{0,\all}\x_T(t) -
                          \c_i^0. \label{eq:control_cortical_feedforward_star} 
    \end{align}
  \end{subequations}
  In a similar fashion, define the control law for the thalamus,
  $\mathcal{N}_T$ by $\u_T(t) = \B_T\K_T\x_T(t) + \B_T\bar{\u}(t)$
  such that it satisfies
  \begin{subequations}\label{eq:control_thalamus_star}
    \begin{align}
      \B_T\K_T &\leq -\W_T^{0,\all}, \label{eq:control_thalamus_feedback_star}\\
      \B_T\bar{\u}_T(t) &\leq -\W_{T,1}^{0,\all}\x_1(t) -
      \c_T^0. \label{eq:control_thalamus_feedforward_star}
    \end{align}
  \end{subequations}
  Now, we permute the system variables and define corresponding
  timescale matrices as follows
  \begin{align*}
    \bar{\x} &= \begin{bmatrix}
      \x^0 \\
      \x^1
    \end{bmatrix} \qquad
    \x^0 = \begin{bmatrix}
      \x_2^0 & \dots & \x_N^0 & \x_T^0 \end{bmatrix}^\top \\
    \x^1 &= \begin{bmatrix}
      \x_1 & \x_2^1 & \dots & \x_N^1 & \x_T^1
    \end{bmatrix}^\top \qquad
    \tau^0 = \text{diag}(\tau_2,\dots,\tau_N,\tau_T) \\
    \tau^1 &= \text{diag}(\tau_1,\tau_2,\dots,\tau_N,\tau_T).
  \end{align*}
  Substituting in the control laws~\eqref{eq:control_cortical_star}
  and~\eqref{eq:control_thalamus_star}, we have the following
  controlled system dynamics
  \begin{subequations}\label{eq:controlled_uniform_timescale_star}
  \begin{align}
    \tau^0\dot{\x}^0 &= -\x^0 \\
    \tau^1\dot{\x}^1 &= -\x^1 + [\bar{\W}\bar{\x} + \c(t)]_\zero^{\bar{\m}},
  \end{align}
  \end{subequations}
  where
  $\bar{\W} = \begin{bmatrix}\bar{\W}^{10} &
    \bar{\W}^{11} \end{bmatrix},$
  with $\bar{\W}^{11}$ as in~\eqref{eq:Wbar_matrix}, and $\c(t)$ the
  permutation of the signal $\c_1(t)$ and the constants $\c_i$,
  $i \in \{2,\dots,N\}\cup\{T\}$ corresponding to the permuted variables. Now, we
  consider a `frozen' version of the
  dynamics~\eqref{eq:controlled_uniform_timescale_star}, in which we
  fix $\c_1(t)$ to a constant $\bar{\c}_1$.  By~\cite[Lemma
  A.1]{EN-JC:21-tacI} the frozen version of the
  dynamics~\eqref{eq:controlled_uniform_timescale_star} is GES to an
  equilibrium $\x^*$, with $\x^0 \to \zero$, if
  $\dot{\bf y} = -{\bf y} + [\bar{\W}^{11}{\bf y} +
  \c]_\zero^{\bar{\m}}$ is GES to an equilibrium. By assumptions (ii)
  and (iii), along with~\cite[Theorem 2]{KEC:86}, we have that
  $\rho(\bar{\W}^{11}) < 1$ and therefore, the dynamics
  $\dot{\bf y} = -{\bf y} + [\bar{\W}^{11}{\bf y} +
  \c]_\zero^{\bar{\m}}$ is GES to an equilibrium, cf.~\cite[Theorem
  IV.8]{EN-JC:21-tacI}. Therefore the frozen version
  of~\eqref{eq:controlled_uniform_timescale_star} is GES to an
  equilibrium using the control laws defined above. Given this, we can
  apply the converse Lyapunov result given in
  Theorem~\ref{thrm:converse_lyap} to get a Lyapunov function for the
  dynamics~\eqref{eq:controlled_uniform_timescale_star} that satisfies
  the requirements of Theorem~\ref{thrm:slowly_varying_stability}. The
  direct application of this result then gives that selective
  inhibition and recruitment~\eqref{eq:selective_inh_rec_star} is
  achieved if $\norm{\dot{\c}_1(t)} \to 0$ and is
  $\epsilon$-selectively inhibited and recruited if it is bounded but
  does not tend to zero.
\end{proof}

Note that Theorem~\ref{thrm:star_uniform_timescale} relies on the
stability of the task-relevant dynamics of each network layer when
considered independently and the assumption that the magnitude of the
combination of thalamocortical and corticothalamic interconnections
does not exceed a certain stability margin. The latter condition is
consistent with neuroscientific observations: in fact, enhanced
corticothalamic feedback may result in pathological
behavior~\cite{KG-JMD:20}. In particular, excessive corticothalamic
feedback has been found to coincide with epileptic loss of
consciousness in absence seizures due to over-inhibition in the
cortical regions~\cite{GKK:01}.

\begin{remark}\longthmtitle{Remote synchronization in star-connected
    brain networks} 
  Remote synchronization is a phenomenon observed in the brain in
  which distant brain regions with similar structure synchronize their
  activity despite the lack of a direct link~\cite{VV-PH:14}. This
  should then naturally arise in star-connected networks if there is
  morphological symmetry between cortical regions, as this topology directly shows regions without direct
  links. From~\eqref{eq:equilibrium_sys_equations_star}, we note that
  if any two cortical regions $\mathcal{N}_i$ and $\mathcal{N}_j$ have
  identical task-relevant dynamics, i.e.,
  $\W_{i,i}^{1,\all} = \W_{j,j}^{1,\all}$,
  $\W_{i,T}^{1,\all} = \W_{j,T}^{1,\all}$ and $\c_i^1 = \c_j^1$, it
  follows that the equilibrium points will satisfy
  $\x_i^{1^*} = \x_j^{1^*}$, meaning that remote synchronization is
  achieved provided that the conditions of
  Theorem~\ref{thrm:star_uniform_timescale} are satisfied.

  Remarkably, this conclusion seems to be independent of the
  particular dynamics of the individual layers. In fact, the
  work~\cite{YQ-YK-MC:18-cdc} studies remote synchronization in star-connected brain networks, cf.
  Fig.~\ref{fig:thalamocortical-networks}(b), with layer dynamics
  given by Kuramoto oscillator dynamics,
  \begin{align}\label{eq:kuramoto_dynamics}
    \dot{\theta}_T &= \omega_0 + \sum_{i=1}^{N} K_i\sin(\theta_i -
    \theta_0 - \xi) \notag
    \\
    \dot{\theta}_i &= \omega + A_i\sin(\theta_0-\theta_i-\xi),
    \qquad i = 1,\dots,N.
  \end{align}
  According to~\cite{YQ-YK-MC:18-cdc}, the outer cortical regions in
  star-connected brain networks can remotely synchronize despite no
  direct links between the regions provided the network dynamics
  satisfy conditions that parallel those required for the
  star-connected linear-threshold networks studied above. In
  particular, to be able to achieve remote synchronization, the
  network weights must satisfy $A_i \geq (N-1)K_i$ for all
  $i \in \until{N}$. This condition guarantees the existence of a
  locally stable equilibrium point, and is equivalent the requirement
  of the matrices defining the task-relevant components of the
  linear-threshold network being individually stable.  \oprocend
\end{remark}

We conclude by noting that the thalamus, as a relay, can function as a
failsafe for the hierarchical thalamocortical network, allowing for
selective inhibition and recruitment even if corticocortical
connections become damaged.  In fact, observe that the hierarchical
topology where the matrices $\W_{i,i-1},\W_{i,i+1}$ are equal to zero
for all $i$ reduces to the star-connected topology, while maintaining
the timescale separation between layers. Therefore, the star-connected
topology with a hierarchical timescale structure can be considered as
a failsafe for the hierarchical thalamocortical network.  The
next result provides conditions for selective inhibition and
recruitment for this topology.

\begin{corollary}\longthmtitle{Selective inhibition of a
    star-connected hierarchical thalamocortical network}\label{cor:hierarchical_star_connected}
  Consider a hierarchical star-connected thalamocortical network of
  the form shown in Fig.~\ref{fig:thalamocortical-networks}(b) with
  timescales $\tau_1 \gg \tau_2 \gg \dots \gg \tau_N$ and layer
  dynamics given by~\eqref{eq:cortical_layer_dynamics}
  and~\eqref{eq:thalamus_layer_dynamics}. Without loss of generality
  let $\tau_T$ be such that $\tau_1 \gg \tau_T \gg \tau_N$ and let
  $a,b \in \{1,\dots,N\}$ such that $\tau_a \gg \tau_T \gg \tau_b$ and
  $b = a+1$. Assume the stability
  assumptions~\eqref{eq:reduced_dynamics_top}-\eqref{eq:reduced_dynamics_below_thal}
  for the reduced-order subnetworks are satisfied. Then, for
  $i \in \{1,\dots,N\}\cup\{T\}$ and constants $\c_i \in \R^{n_i}$ and
  $\c_T \in \R^{n_T}$, there exist control laws
  $\u_i(t) = \K_i\x_i(t) + \bar{\u_i}(t)$, with
  $\K_i \in \R^{p_i \times n_i}$ and
  $\map{\bar{\u}_i}{\Rpluseq}{\Rpluseq^{p_i}}$, such that the
  closed-loop system achieves selective inhibition and
  recruitment~\eqref{eq:selective-recruitment-inhibition}.
\end{corollary}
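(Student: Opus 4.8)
The plan is to recognize the hierarchical star-connected network as a structural special case of the general hierarchical thalamocortical network treated in Theorem~\ref{thrm:multilayer_LTR}, and then invoke that result directly.

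First I would establish the reduction at the level of the dynamics. In the star topology of Fig.~\ref{fig:thalamocortical-networks}(b) there are no direct corticocortical connections, which amounts to setting $\W_{i,i-1} = \W_{i,i+1} = \zero$ for all $i \in \until{N}$ in the hierarchical cortical dynamics~\eqref{eq:cortical_layer_dynamics}. Under this substitution~\eqref{eq:cortical_layer_dynamics} collapses to $\tau_i\dot{\x}_i = -\x_i + [\W_{i,i}\x_i + \W_{i,T}\x_T + \B_i\u_i(t) + \c_i]_\zero^{\m_i}$, while the thalamus dynamics~\eqref{eq:thalamus_layer_dynamics} are unchanged, since the thalamus couples to every cortical layer in both topologies. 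Hence the present network is exactly an instance of the hierarchical thalamocortical model with vanishing corticocortical weights, and by hypothesis it retains the hierarchical timescale ordering $\tau_1 \gg \dots \gg \tau_N$ with $\tau_a \gg \tau_T \gg \tau_b$, $b = a+1$, placing the thalamus inside the hierarchy precisely as Theorem~\ref{thrm:multilayer_LTR} requires.

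Second I would verify the hypotheses of Theorem~\ref{thrm:multilayer_LTR}. The GES conditions~\eqref{eq:reduced_dynamics_top}-\eqref{eq:reduced_dynamics_below_thal} on the reduced-order subnetworks are assumed outright in the corollary. The piecewise-affinity and global Lipschitzness of the equilibrium maps, cf. Lemma~\ref{lemma:piecewise_affine_eq_maps}, continue to hold; indeed the maps simplify, since the recursive dependence of each layer's equilibrium map on the layer below it through $\W_{i,i+1}$ (and of layer $\mathcal{N}_a$ on $\mathcal{N}_b$ through $\W_{a,b}$) in~\eqref{eq:below_thal_eq_map}-\eqref{eq:layer_a_eq_map} vanishes, leaving only the dependence on the thalamic state. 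With the structure and all assumptions in place, Theorem~\ref{thrm:multilayer_LTR} applies and furnishes control laws $\u_i(t) = \K_i\x_i(t) + \bar{\u}_i(t)$ achieving selective inhibition and recruitment in the sense of~\eqref{eq:selective-recruitment-inhibition}.

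I do not anticipate a substantive obstacle, since this is a specialization of an already-established theorem; the only care required is to confirm that zeroing the corticocortical weights removes, rather than invalidates, terms in the singular-perturbation construction. Concretely, the feedback-feedforward controls~\eqref{eq:multilayer_layer_N_control}-\eqref{eq:multilayer_above_thalamus_control} built in the proof of Theorem~\ref{thrm:multilayer_LTR} simplify: every feedforward contribution proportional to $\W_{i,i-1}^{0,\all}\x_{i-1}(t)$ drops out, as do the cross-layer feedback gains $|\W_{i,i+1}^{01}|\bar{\F}_{i+1}|\W_{i+1,i}^{1,\all}|$, so each controller depends only on the thalamic state and the external constants. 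The conceptual point worth emphasizing is that this confirms the star-connected architecture as a genuine failsafe for the hierarchical one: even with all direct corticocortical pathways severed, the transthalamic route alone suffices to achieve selective inhibition and recruitment.
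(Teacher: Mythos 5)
Your proposal is correct and takes essentially the same route as the paper: the paper likewise treats this network as the hierarchical model of Theorem~\ref{thrm:multilayer_LTR} with $\W_{i,i-1} = \W_{i,i+1} = \zero$ for all $i$, and obtains the result by specializing that theorem's singular-perturbation construction, with the control laws~\eqref{eq:multilayer_layer_N_control}--\eqref{eq:multilayer_above_thalamus_control} simplified exactly as you describe. One minor imprecision: after zeroing the corticocortical weights it is the \emph{cortical} controllers that depend only on the thalamic state and constants, while the thalamus's own feedforward law still depends on cortical states (it remains coupled to all layers), but this does not affect the validity of your argument.
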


The proof of the result is similar to that of
Theorem~\ref{thrm:multilayer_LTR}, with differences occurring in the
constructed control laws on the basis that
$\W_{i,i+1} = \W_{i,i-1} = \zero$ for all $i \in \until{N}$.  The loss
of these connections plays a significant role in the form of the
control. In particular, the amount of feedforward control coming from
the thalamus to the cortical regions increases, due to the fact that
direct feedforward control between cortical regions is not possible.
Fig.~\ref{fig:sample_trajectories-star} illustrates
Corollary~\ref{cor:hierarchical_star_connected} on the star-connected
network obtained by removing the direct connections between cortical
regions in the network of Fig.~\ref{fig:sample_trajectories}.

\begin{figure}[htb]
    \centering
  \includegraphics[scale=0.18]{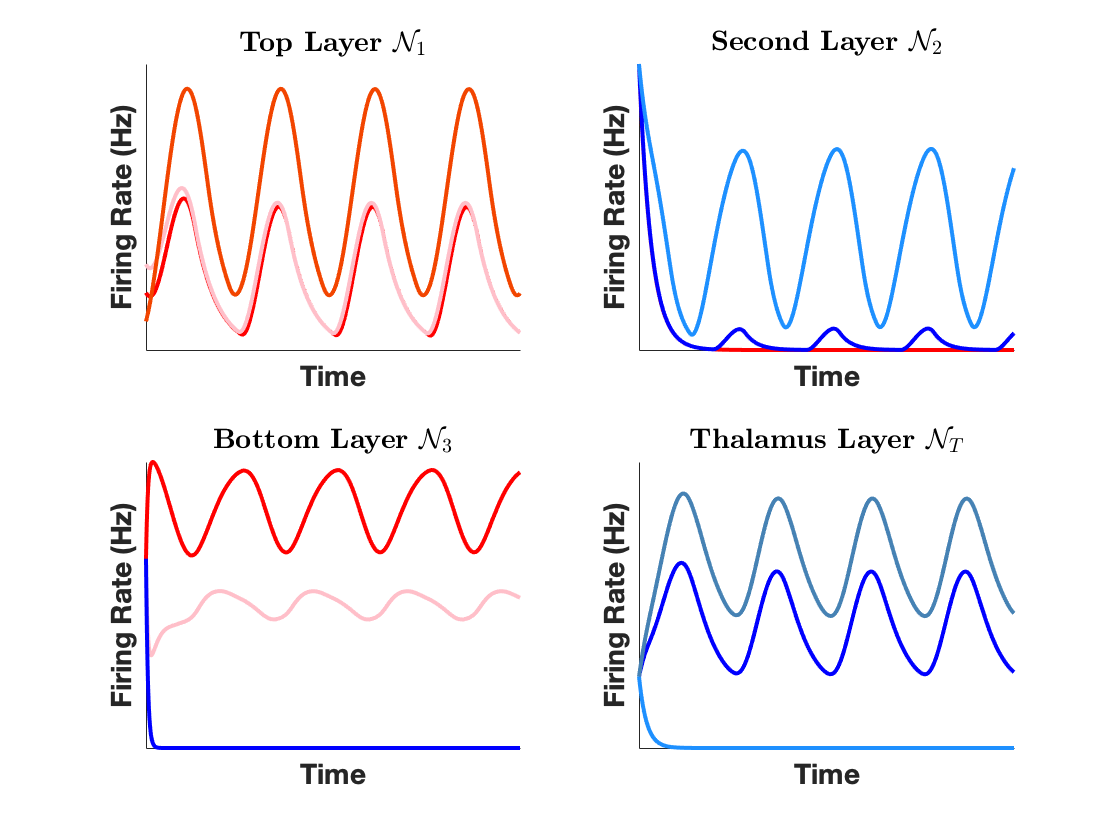}
  \caption{Trajectories of the network in
    Fig.~\ref{fig:sample_trajectories} with the connections between
    the cortical regions removed. The star-connected topology allows
    for successful selective inhibition of the desired set of nodes
    despite the lack of connections between cortical regions (cf.
 Corollary~\ref{cor:hierarchical_star_connected}),
    providing a failsafe topology.
  }\label{fig:sample_trajectories-star}
    \vspace*{-1ex}
\end{figure}

\section{Quantitative Comparison of Cortical and Thalamocortical
  Networks}\label{sec:examples}

This section seeks to quantitatively illustrate ways in which the
presence of the thalamus might have a beneficial effect in the
behavior and performance of the dynamic models for brain networks
adopted here.  Fig.~\ref{fig:sample_trajectories-star} has already
illustrated the failsafe role played by the thalamus in hierarchical
thalamocortical networks.  Here we focus on two other beneficial
impacts of the thalamus that we have observed in simulation: the
required control magnitude to achieve selective inhibition and the
convergence time in thalamocortical networks versus cortical ones.

\begin{example}\longthmtitle{Reduced
    average control magnitude in thalamocortical vs cortical networks} 
  We investigate the control magnitude required to achieve selective
  inhibition.  Control magnitude here refers to the aggregate of the
  inputs at all layers integrated over time and averaged across
  trials.  We consider hierarchical pairs of thalamocortical and
  cortical networks, where the latter is obtained by disconnecting the
  thalamus in the former.  Fig.~\ref{fig:average_control} shows that
  thalamocortical networks require a lower control magnitude to
  achieve selective inhibition in the cortical regions relative to the
  corresponding strictly cortical networks, matching the intuition
  that they are easier to selectively inhibit due to the thalamus
  impacting the cortical regions in an inhibitory fashion.

\oprocend

\end{example}

\begin{figure}[htb]
  \centering
  \includegraphics[scale=0.2]{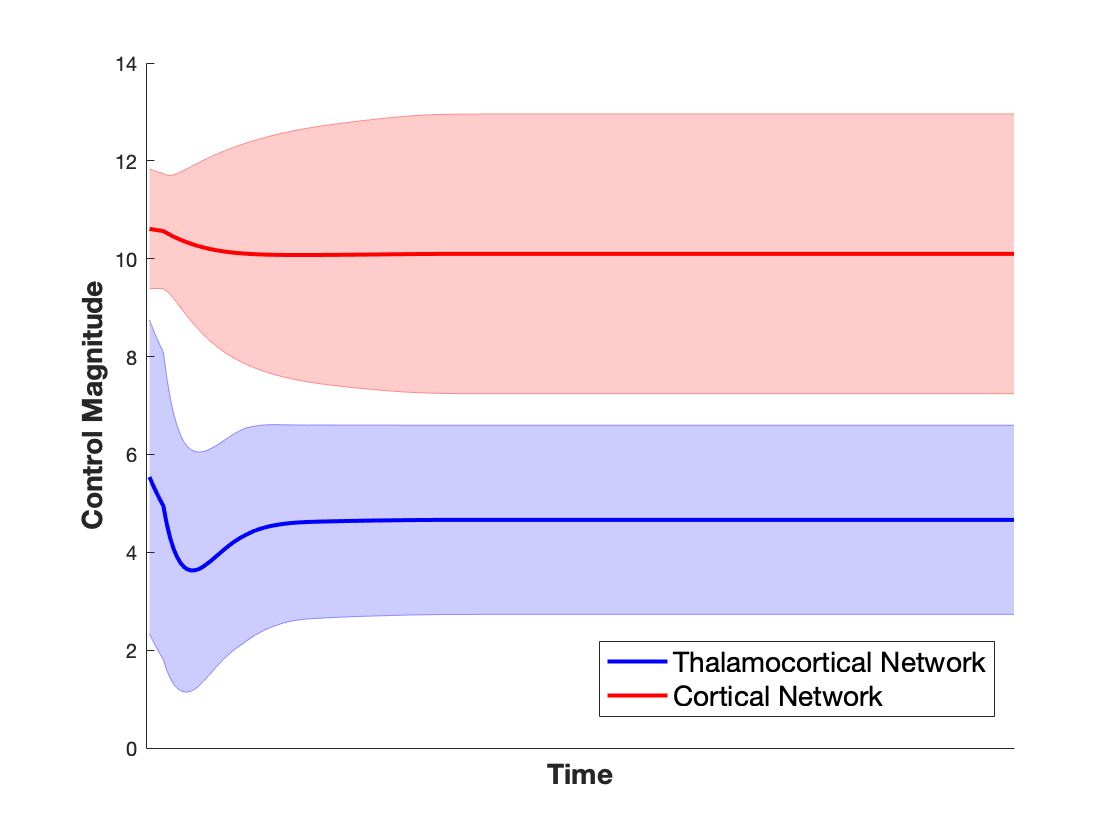}
  \caption{Comparison of average control magnitude in cortical and
    thalamocortical networks. The plot displays the control magnitude
    required to selectively inhibit the bottom cortical layer averaged
    over $100$ random thalamocortical networks and the corresponding
    cortical network in which the thalamus is removed. Shaded regions
    correspond to the error bars.  All networks are composed of two
    three-node cortical regions. The top cortical layer has two
    excitatory nodes and the bottom layer has only one. The thalamus
    is composed of two inhibitory nodes. Each thalamocortical and
    cortical network generated satisfies the assumptions of
    Theorem~\ref{thrm:multilayer_LTR}
    and~\cite[Theorem~IV.3]{EN-JC:21-tacII}, resp., along with
    biological sign constraints.  To make the required control
    magnitude directly comparable, the thalamocortical and cortical
    networks are inhibiting the same set of nodes within the bottom
    cortical layer. }\label{fig:average_control}
    \vspace*{-1ex}
\end{figure}

\begin{example}\longthmtitle{Convergence time of thalamocortical and
    cortical networks}
  Here we consider the speed at which thalamocortical networks
  converge to an equilibrium as another metric to evaluate the role of
  the thalamus. We compare the convergence time for a cortical network
  with that of a thalamocortical network maintaining the same cortical
  regions. While performing the comparison is interesting as a
  function of multiple network parameters (e.g., network size, layer
  size, ratio of excitatory-inhibitory nodes), we focus here
  specifically on the thalamus and, in particular, on varying its
  timescale with respect to the cortical regions in the
  model. Thalamocortical networks with varying timescales are of
  particular interest due to their existence in the brain, as even
  restricting only to the visual thalamus, the thalamus operates at
  both slow and fast timescales~\cite{ZY-XY-CMH-ZA-NPF-WW-SGB:17}.
  Fig.~\ref{fig:convergence_speed_timescales} shows that
  thalamocortical networks have faster average convergence time, with
  the margin between the two networks decreasing as the timescale
  $\tau_T$ increases.  This validates the the beneficial role played
  by the thalamus, with faster thalamus dynamics (smaller $\tau_T$)
  helping the cortical regions converge faster, leading to  overall
  decreased  convergence time.
  \oprocend
\end{example}

\begin{figure}
  \centering
  \includegraphics[scale=0.19]{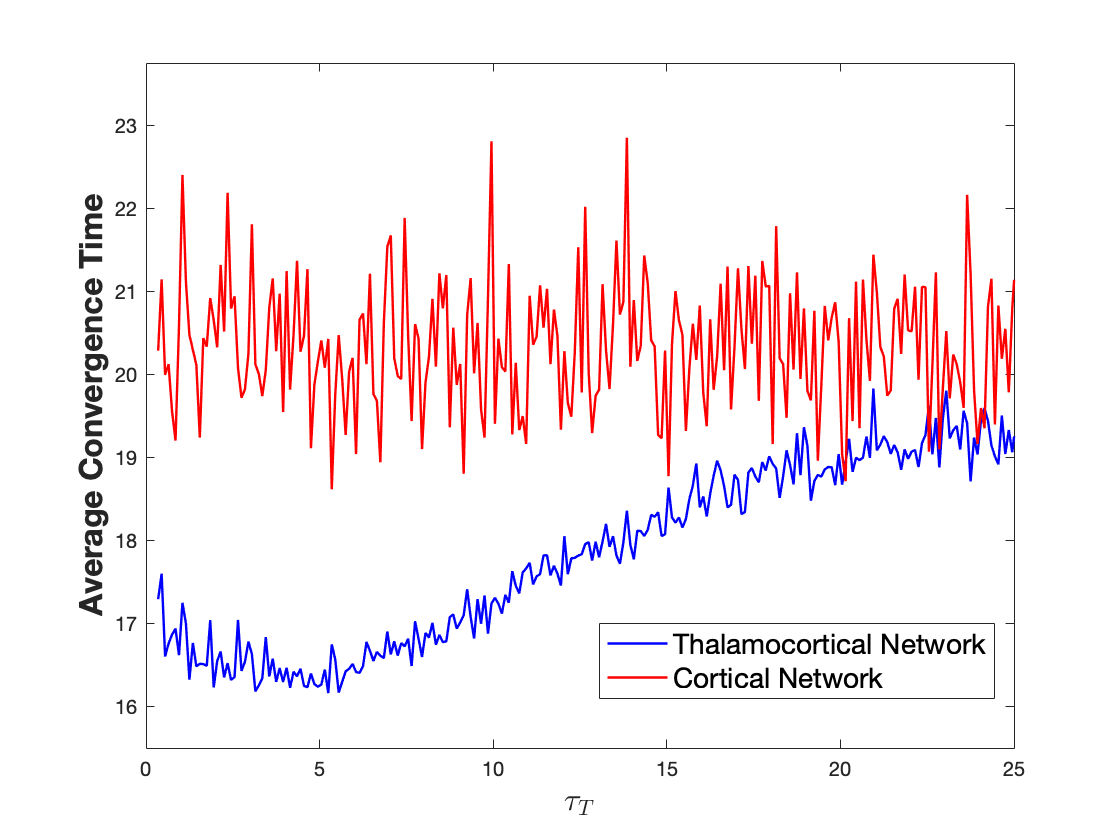}
  \caption{Convergence time of thalamocortical network with changing
    timescales. The graph illustrates the average convergence time to
    the equilibrium for $30$ randomly generated cortical and
    thalamocortical networks with $20$ layers, each containing two
    excitatory and two inhibitory nodes. For each simulation the
    initial condition is set to be a uniform distance away from the
    equilibrium and the set of nodes to be inhibited is randomly
    selected.  The average convergence time for nodes within the
    thalamocortical network is lower than for the cortical network. As
    the timescale $\tau_T$ of the thalamus increases, the margin of
    improvement decreases. }\label{fig:convergence_speed_timescales}
  \vspace*{-1ex}
\end{figure}

\section{Conclusions}
We have studied the properties of both multilayer hierarchical and
star-connected thalamocortical brain networks modeled with
linear-threshold dynamics. Our primary motivation was understanding
the role played by the thalamus in achieving selective inhibition and
recruitment of neural populations.  For both types of interconnection
topologies, we have described how the equilibria at each layer depends
on the rest of the network and identified suitable stability
conditions.  For hierarchical networks, these take the form of GES
requirements of the reduced-order dynamics of individual layers.  For
star-connected thalamocortical network without a hierarchy of
timescales, these take the form of stability of the task-relevant
dynamics of each layer when considered independently and the magnitude
of the combination of thalamocortical and corticothalamic
interconnections not exceeding a certain stability margin.  Future work
will seek to analytically characterize the robustness and performance
of thalamocortical networks, study the role of the thalamus in other
cognitive tasks beyond selective attention (e.g., sleep and
consciousness, oscillations, and learning), and explore the impact of
the addition and deletion of neuronal populations (neurogenesis)
in the performance and expressivity of brain networks.

{
\bibliographystyle{ieeetr}

}


\begin{IEEEbiography}[{\includegraphics[width=1in,height=1.25in,clip,keepaspectratio]{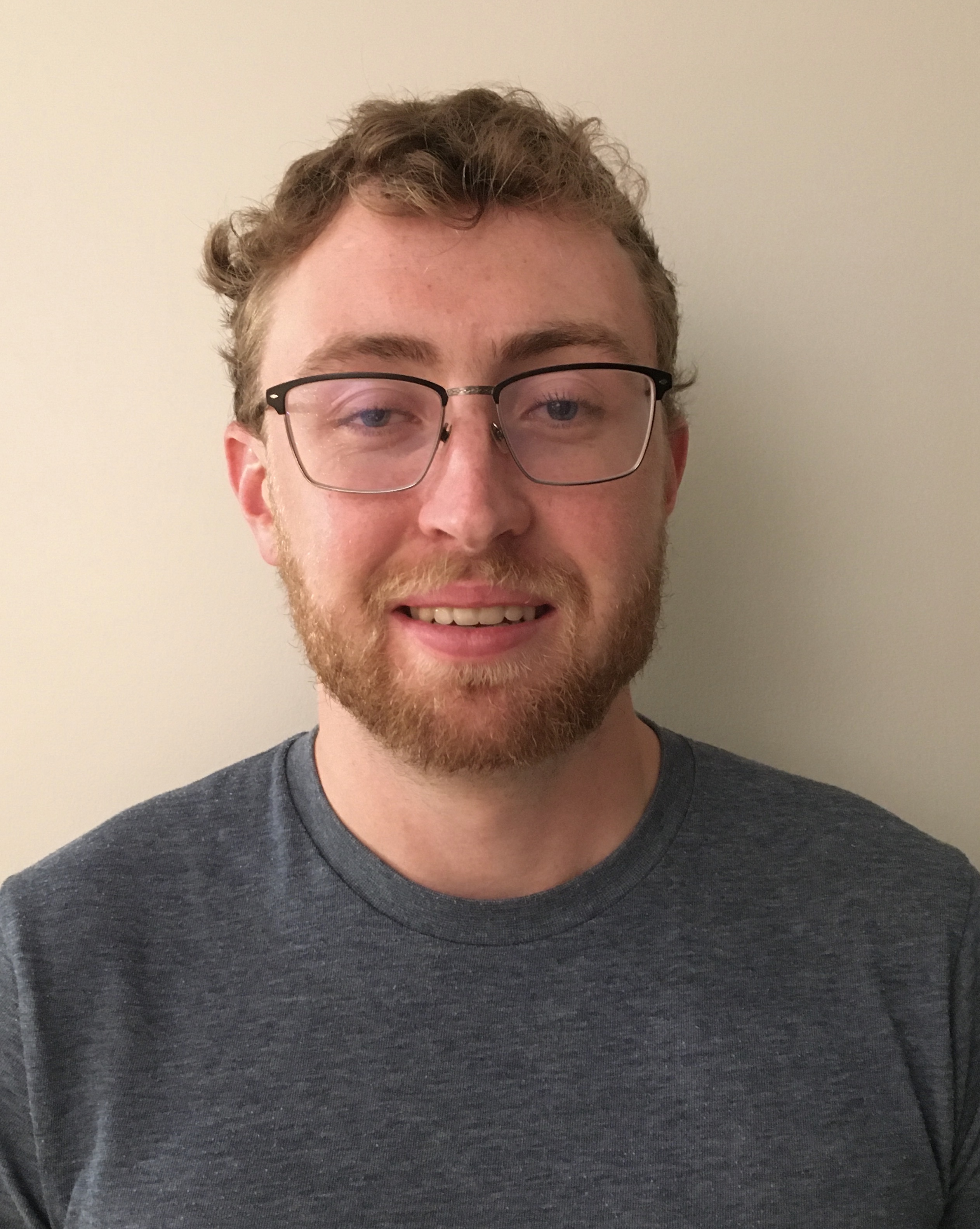}}]{Michael
    McCreesh} received his B.A.Sc degree in Mathematics and
  Engineering and his M.A.Sc in Mathematics and Engineering from
  Queen's University, Kingston, Canada in 2017 and 2019,
  resp. He is currently a Ph.D. student in the Department of
  Mechanical and Aerospace Engineering at the University of California
  San Diego. His current research interests include control theory and
  its application to theoretical neuroscience, in particular the
  application of dynamical systems to model brain networks.
\end{IEEEbiography}

\vspace*{-6ex}

\begin{IEEEbiography}[{\includegraphics[width=1in,height=1.25in,clip,keepaspectratio]{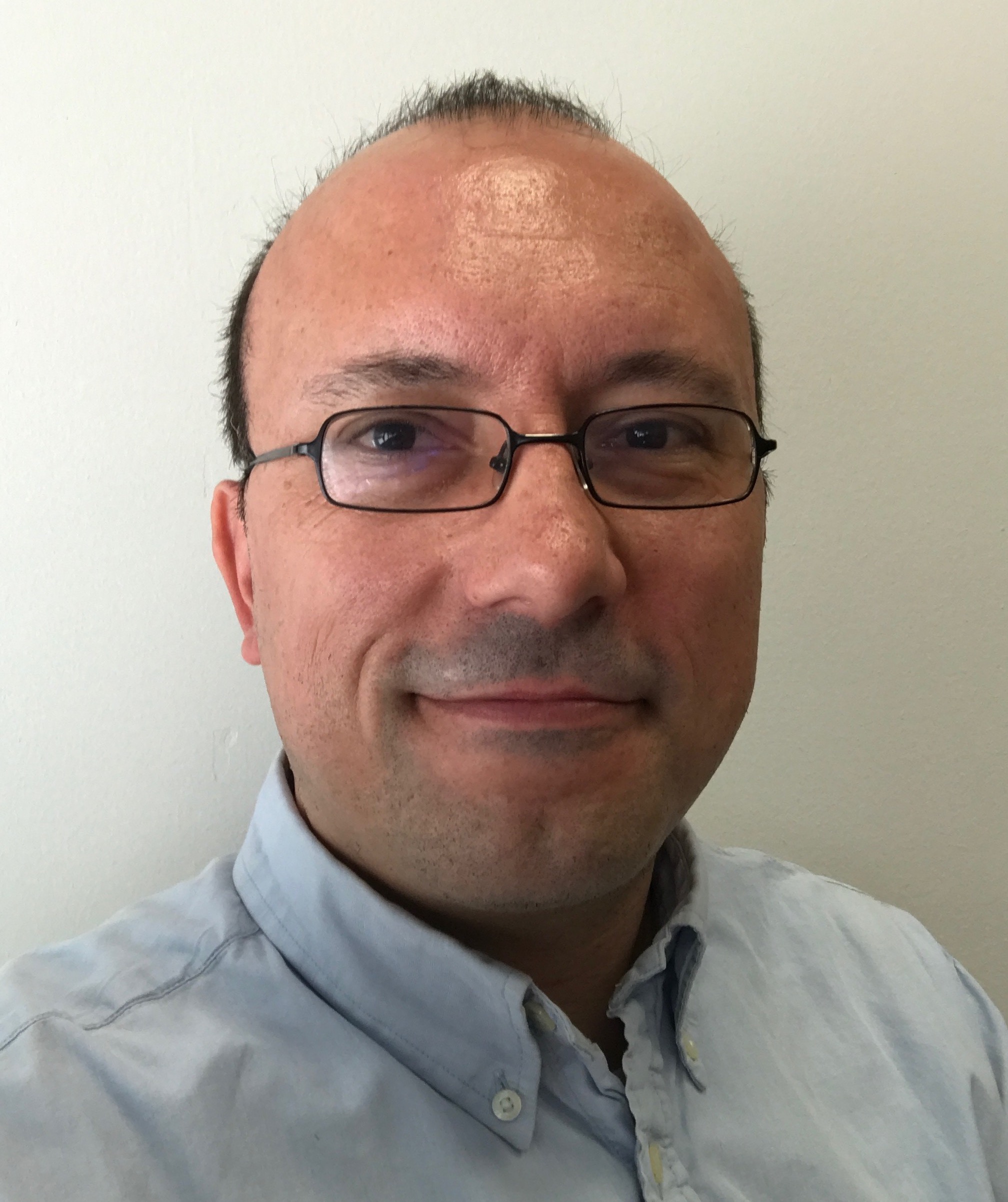}}]{Jorge
    Cort\'{e}s}
  (M'02, SM'06, F'14) received the Licenciatura degree in mathematics
  from Universidad de Zaragoza, Zaragoza, Spain, in 1997, and the
  Ph.D. degree in engineering mathematics from Universidad Carlos III
  de Madrid, Madrid, Spain, in 2001. He held postdoctoral positions
  with the University of Twente, Twente, The Netherlands, and the
  University of Illinois at Urbana-Champaign, Urbana, IL, USA. He was
  an Assistant Professor with the Department of Applied Mathematics
  and Statistics, University of California, Santa Cruz, CA, USA, from
  2004 to 2007. He is currently a Professor in the Department of
  Mechanical and Aerospace Engineering, University of California, San
  Diego, CA, USA.  He is the author of Geometric, Control and
  Numerical Aspects of Nonholonomic Systems (Springer-Verlag, 2002)
  and co-author (together with F. Bullo and S.  Mart{\'\i}nez) of
  Distributed Control of Robotic Networks (Princeton University Press,
  2009).  He is a Fellow of IEEE and SIAM.
  His current research interests include distributed control and
  optimization, network science, nonsmooth analysis, reasoning and
  decision making under uncertainty, network neuroscience, and
  multi-agent coordination in robotic, power, and transportation
  networks.
\end{IEEEbiography}

\appendix

For completeness, here we include two results on the stability of
slowly varying nonlinear systems to a continuum equilibria,
generalized from~\cite[Theorem 3.1]{SL-HA:15} to the case of
exponential stability. Let
\begin{align}\label{eq:nonlin_system}
  \dot{\x} = f(\x(t),\d(t)),
\end{align}
where $\x(t) \in \R^n$ and $\d(t) \in \mathcal{D}$, and $\mathcal{D}$
is a compact subset of $\R^m$. We assume $f$ is continuous on
$\R^n \times \mathcal{D}$ and is locally Lipschitz in both $\x$ and
$\d$. We further consider the `frozen' version of the
system~\eqref{eq:nonlin_system} with a fixed parameter $\d$,
\begin{align}\label{eq:frozen_sys}
  \dot{\x} = f(\x(t),\d), \quad \d \in \mathcal{D}.
\end{align}
We denote the solution to~\eqref{eq:frozen_sys} for each initial
condition $\x(t_0) = \x_0$ and $\d \in \mathcal{D}$ by
$\x(t,\x_0,\d)$. Let $\mathcal{A}$ be a forward invariant set of the
system~\eqref{eq:frozen_sys}.

\begin{theorem}\longthmtitle{Exponential Stability of Slowly Varying
    Nonlinear Systems}\label{thrm:slowly_varying_stability}
  Consider the nonlinear system~\eqref{eq:nonlin_system} and assume
  there exists a continuously differentiable function
  $\map{V}{\R^n \times \mathcal{D}}{\R}$ such that
  \begin{subequations}
    \begin{align}
      k_1|\x|_\A^2 &\leq V(\x,\d) \leq
                     k_2|\x|_\A^2 \label{eq:slowly_varying_lyap_req1}
      \\
      \frac{\partial V(\x,\d)}{\partial \x}f(\x,\d) & \leq
                                                      -k_3|\x|_\A^2 \label{eq:slowly_varying_lyap_req2}
      \\
      \left|\frac{\partial V(\x,\d)}{\partial \d} \right| &\leq
                                                            k, \label{eq:slowly_varying_lyap_req3} 
    \end{align}
  \end{subequations}
  for all $\x \in \R^n$ and $\d \in \mathcal{D}$, with $k_1,k_2,k_3$
  and $k$ nonnegative constants.  If $\norm{\dot{d}(t)}$ is uniformly
  bounded in time, then there exist constants $\gamma,\lambda$ and $T$
  such that
  \begin{subequations}
    \begin{align}
      |\x(t,\x_0,\d)|_\A &\leq \gamma\norm{\x_0}e^{-\lambda(t-t_0)}
                           \quad \forall~ t_0 \leq t \leq t_0+T
      \\
      |\x(t,\x_0,\d)|_\A & \leq \frac{k_3}{k_2} |\x_0|_\A \quad \forall~ t \geq t_0,
    \end{align}
    If $\lim_{t\to \infty} \norm{\dot{\d}(t)} = 0$, then the system is
    exponentially stable.
  \end{subequations}
\end{theorem}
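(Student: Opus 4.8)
The plan is to run the standard Lyapunov argument for slowly varying systems, in the spirit of~\cite{HKK:02}, exploiting that the quadratic bounds~\eqref{eq:slowly_varying_lyap_req1}--\eqref{eq:slowly_varying_lyap_req2} upgrade the merely asymptotic conclusion of~\cite[Theorem 3.1]{SL-HA:15} to an exponential one. First I would evaluate $V$ along a trajectory of the actual (non-frozen) system~\eqref{eq:nonlin_system}, i.e. with $\d = \d(t)$, and differentiate by the chain rule to get
\begin{align*}
  \dot{V} = \frac{\partial V(\x,\d)}{\partial \x} f(\x,\d(t)) +
  \frac{\partial V(\x,\d)}{\partial \d}\dot{\d}(t).
\end{align*}
The key observation is that, at each fixed $t$, the value $\d(t)$ lies in $\mathcal{D}$, so the frozen decrease condition~\eqref{eq:slowly_varying_lyap_req2} applies pointwise in time to the first term, while~\eqref{eq:slowly_varying_lyap_req3} together with Cauchy--Schwarz bounds the second by $k\norm{\dot{\d}(t)}$. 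This yields the differential inequality $\dot{V} \le -k_3|\x|_\A^2 + k\norm{\dot{\d}(t)}$.

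Next I would convert this into a scalar comparison inequality by invoking the upper bound $V \le k_2|\x|_\A^2$ from~\eqref{eq:slowly_varying_lyap_req1}, obtaining $\dot{V} \le -(k_3/k_2)V + k\norm{\dot{\d}(t)}$. Applying the comparison lemma to integrate this first-order linear inequality produces
\begin{align*}
  V(t) \le V(t_0)e^{-\frac{k_3}{k_2}(t-t_0)} + k\int_{t_0}^t
  e^{-\frac{k_3}{k_2}(t-s)}\norm{\dot{\d}(s)}\,ds.
\end{align*}
Translating back to the set-distance through the lower bound $k_1|\x|_\A^2 \le V$ then yields both estimates: dropping the nonnegative perturbation integral over a finite window $[t_0,t_0+T]$ and bounding $V(t_0)$ in terms of $\norm{\x_0}$ via~\eqref{eq:slowly_varying_lyap_req1} gives the transient exponential estimate with $\gamma = \sqrt{k_2/k_1}$ and $\lambda = k_3/(2k_2)$, while retaining the integral and using the uniform bound on $\norm{\dot{\d}}$ gives the uniform-in-time estimate.

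For the final claim, when $\norm{\dot{\d}(t)} \to 0$ I would show the convolution term vanishes as $t\to\infty$: since the kernel $e^{-(k_3/k_2)(t-s)}$ is an integrable decaying exponential, splitting the integral at $t/2$ lets one bound the tail near $s=t$ by $\sup_{s\ge t/2}\norm{\dot{\d}(s)}$ (which tends to zero) and the early part by the uniform bound on $\norm{\dot{\d}}$ times $e^{-(k_3/k_2)(t/2)}$ (which also tends to zero). Hence $V(t)\to0$ and $\x(t)\to\A$, with the rate inherited from the kernel giving exponential convergence.

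I expect the main obstacle to be twofold. First, care is needed in manipulating the set-distance $|\x|_\A$ rather than a distance to a point: one must use the forward invariance of $\A$ under the frozen system~\eqref{eq:frozen_sys} to ensure the decrease condition~\eqref{eq:slowly_varying_lyap_req2} is consistent along trajectories and that the Lyapunov computation is valid off $\A$. Second, and more delicate, is cleanly upgrading the ultimate-boundedness estimate to genuine exponential stability in the $\norm{\dot{\d}}\to0$ regime, where the convolution-vanishing argument must be made uniform. It is precisely the quadratic (rather than generic class-$\mathcal{K}$) form of~\eqref{eq:slowly_varying_lyap_req1}--\eqref{eq:slowly_varying_lyap_req2} that collapses the comparison dynamics to a linear ODE and thereby produces the exponential rate absent from~\cite{SL-HA:15}.
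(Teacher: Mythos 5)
Your overall route---differentiate $V$ along the true trajectory, apply the frozen decrease condition \eqref{eq:slowly_varying_lyap_req2} pointwise in $t$, bound the parameter term via \eqref{eq:slowly_varying_lyap_req3}, and close with the comparison lemma---is exactly the line of reasoning the paper invokes (its proof defers to \cite[Theorem 3.1]{SL-HA:15} with modifications from \cite{HKK:02} to exploit the quadratic bounds), so the approach itself is the right one. However, two of your steps fail as written. You cannot obtain the finite-window estimate by ``dropping the nonnegative perturbation integral'': that integral is an \emph{added} term on the right-hand side of an upper bound, so discarding it strengthens the inequality and is not justified; what your convolution bound actually yields is exponential decay \emph{plus} an additive offset proportional to $\sup_s\norm{\dot{\d}(s)}$, which is weaker than the first displayed conclusion. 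The correct route is the standard ultimate-boundedness splitting: fix $\theta \in (0,1)$ and note that whenever $|\x|_\A^2 \ge k \sup_s \norm{\dot{\d}(s)}/\big((1-\theta)k_3\big)$ one has $\dot{V} \le -\theta k_3 |\x|_\A^2 \le -(\theta k_3/k_2)V$, so $V$ decays exponentially while the trajectory remains outside this ball; $T$ is the finite entry time into the corresponding sublevel set of $V$, and forward invariance of that set gives the bound for later times.

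The second, more serious, flaw is the closing claim that the convolution term ``inherits the rate of the kernel.'' This is false: $\int_{t_0}^t e^{-(k_3/k_2)(t-s)}\norm{\dot{\d}(s)}\,ds$ decays no faster than $\norm{\dot{\d}}$ itself. For instance, with $\norm{\dot{\d}(s)} = 1/(1+s)$ the integral decays like $k_2/(k_3 t)$, not exponentially. Your split-at-$t/2$ argument correctly shows the term vanishes, hence $V(t)\to 0$ and $|\x(t)|_\A \to 0$; but that is asymptotic convergence---precisely the conclusion of \cite[Theorem 3.1]{SL-HA:15}---and not an exponential rate. Since accounting for exponential stability is exactly the ``slight modification'' of \cite{HKK:02}-type results that distinguishes this theorem from \cite{SL-HA:15}, this is the one step that cannot be waved through: you must either invoke a perturbation result for exponentially stable systems with decaying disturbance under an assumption on how fast $\norm{\dot{\d}}$ decays, or interpret the final claim as exponential stability of the limiting frozen dynamics; the inference as you wrote it does not hold.
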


The proof follows a similar line of reasoning as in~\cite[Theorem
3.1]{SL-HA:15} with slight modifications in the use of stability
results from~\cite{HKK:02} to account for exponential stability. We
now provide a converse Lyapunov result, modified from~\cite[Theorem
3.2]{SL-HA:15}, complementary to the above result.

\begin{theorem}\longthmtitle{Converse Lyapunov Theorem for Slowly
    Varying Nonlinear Systems}\label{thrm:converse_lyap}
  If, for every fixed $\d \in \mathcal{D}$, the nonlinear
  system~\eqref{eq:nonlin_system} is GES to an equilibrium, then there
  exists a Lyapunov function that
  satisfies~\eqref{eq:slowly_varying_lyap_req1},~\eqref{eq:slowly_varying_lyap_req2}
  and~\eqref{eq:slowly_varying_lyap_req3}.
\end{theorem}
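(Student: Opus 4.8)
The plan is to follow the classical integral construction for converse Lyapunov theorems, as in \cite[Theorem 3.2]{SL-HA:15} and \cite[Theorem 4.14]{HKK:02}, refined so that the quadratic bounds \eqref{eq:slowly_varying_lyap_req1}--\eqref{eq:slowly_varying_lyap_req3} reflect the exponential (rather than merely asymptotic) rate. First I would upgrade the per-parameter GES hypothesis to a uniform one: since \eqref{eq:frozen_sys} is GES to an equilibrium for each fixed $\d$, and $\mathcal{D}$ is compact with $f$ continuous and locally Lipschitz, a compactness/continuity argument yields constants $\gamma,\lambda>0$ \emph{independent} of $\d$ with $|\x(\tau,\x,\d)|_\A \le \gamma |\x|_\A e^{-\lambda\tau}$ for all $\d\in\mathcal{D}$, $\tau\ge 0$. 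Converting GES toward the per-parameter equilibrium $\x^*(\d)\in\A$ into this set-distance estimate is where the continuum-of-equilibria structure underlying the slowly-varying framework enters, and may be carried over directly from \cite{SL-HA:15}. With this in hand I define the candidate
\[
  V(\x,\d) = \int_0^T |\x(\tau,\x,\d)|_\A^2 \, d\tau,
\]
where $\x(\tau,\x,\d)$ is the solution of \eqref{eq:frozen_sys} from $\x$ and $T>0$ is a finite horizon; the finiteness of $T$ is precisely what makes the $\d$-derivative controllable.

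Next I would verify \eqref{eq:slowly_varying_lyap_req1} and \eqref{eq:slowly_varying_lyap_req2}. The upper bound follows from the uniform GES estimate, $V(\x,\d) \le \tfrac{\gamma^2}{2\lambda}|\x|_\A^2$, giving $k_2$; the lower bound uses the local Lipschitz property of $f$ to limit how fast a trajectory can approach $\A$, namely $\tfrac{d^+}{d\tau}|\x(\tau)|_\A \ge -L|\x(\tau)|_\A$, whence $|\x(\tau)|_\A \ge e^{-L\tau}|\x|_\A$ and $V(\x,\d)\ge \tfrac{1-e^{-2LT}}{2L}|\x|_\A^2$, giving $k_1$. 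The decrease condition exploits the flow (semigroup) property $\x(\tau,\x(s,\x,\d),\d)=\x(\tau+s,\x,\d)$: differentiating $s\mapsto V(\x(s,\x,\d),\d)=\int_s^{T+s}|\x(u,\x,\d)|_\A^2\,du$ at $s=0$ gives $\tfrac{\partial V}{\partial\x}f(\x,\d) = |\x(T,\x,\d)|_\A^2 - |\x|_\A^2 \le (\gamma^2 e^{-2\lambda T}-1)|\x|_\A^2$. Fixing $T$ large enough that $\gamma^2 e^{-2\lambda T}\le \tfrac12$ then delivers \eqref{eq:slowly_varying_lyap_req2} with $k_3=\tfrac12$, and simultaneously pins down the horizon used in the bounds above.

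The crux, and the step I expect to be the main obstacle, is the parameter-gradient bound \eqref{eq:slowly_varying_lyap_req3}. Since $\A$ is a fixed set, $|\cdot|_\A$ depends on $\d$ only through the flow, so I would differentiate under the integral to get $\tfrac{\partial V}{\partial\d}=\int_0^T 2|\x(\tau,\x,\d)|_\A\,\nabla_\x|\x(\tau)|_\A^{\top}\,\tfrac{\partial\x(\tau,\x,\d)}{\partial\d}\,d\tau$, where the distance gradient has norm at most one. The sensitivity $\tfrac{\partial\x(\tau,\x,\d)}{\partial\d}$ solves the variational equation for \eqref{eq:frozen_sys}, and a Gronwall estimate over the \emph{finite} horizon $[0,T]$ bounds it by a constant depending only on $T$ and the Lipschitz constants of $f$ in $\x$ and $\d$; together with the factor $|\x(\tau)|_\A$ this yields a uniform $k$. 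The delicate point is regularity: the hypotheses give $f$ only locally Lipschitz, so the flow is Lipschitz—not a priori $C^1$—in $\d$. I would resolve this either by reading \eqref{eq:slowly_varying_lyap_req3} as a Lipschitz bound on $\d\mapsto V(\x,\d)$ obtained directly from Gronwall estimates on $\norm{\x(\tau,\x,\d)-\x(\tau,\x,\d')}$, or by mollifying $f$ in $\d$ and passing to the limit, exactly as the exponential-rate refinement of \cite[Theorem 3.2]{SL-HA:15} requires.
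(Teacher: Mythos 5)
Your overall route is the right one. The paper's own ``proof'' is a single sentence deferring to \cite[Theorem 3.2]{SL-HA:15}, noting only that the GES hypothesis fixes the final form of the function, and the finite-horizon integral construction you propose, $V(\x,\d)=\int_0^T|\x(\tau,\x,\d)|_\A^2\,d\tau$, is precisely the classical realization of that argument under exponential stability (cf.\ the converse theorems in \cite{HKK:02}). Your verifications of \eqref{eq:slowly_varying_lyap_req1} and \eqref{eq:slowly_varying_lyap_req2} --- upper bound from the exponential estimate, lower bound from the escape-rate inequality $\tfrac{d^+}{d\tau}|\x(\tau)|_\A\ge -L|\x(\tau)|_\A$, decrease from the semigroup property --- are the standard and correct computations, with the caveat that the escape-rate inequality requires $f(\cdot,\d)$ to vanish on $\A$; this does hold in the continuum-of-equilibria framework of \cite{SL-HA:15} that the paper adopts, but you should say so, since it is exactly where that structure enters.

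Two of your claims, however, outrun the estimates behind them. First, and most concretely, your verification of \eqref{eq:slowly_varying_lyap_req3} does not produce a constant $k$: with the sensitivity bounded by a constant $C_T$ and the distance gradient by one, the integrand is bounded by $2|\x(\tau,\x,\d)|_\A\, C_T \le 2\gamma C_T e^{-\lambda\tau}|\x|_\A$, so integration yields $|\partial V/\partial \d| \le K|\x|_\A$ --- linear in the distance, hence unbounded over $\R^n$, whereas \eqref{eq:slowly_varying_lyap_req3} as stated demands a uniform constant for all $\x\in\R^n$. This becomes a constant only after restricting to a compact forward-invariant region; that restriction is harmless for the paper's purposes (linear-threshold states are confined to $[\zero,\m]$, and the downstream slowly-varying argument in Theorem~\ref{thrm:slowly_varying_stability} works equally well with a bound of the form $k|\x|_\A$), but it must be made explicit, or the statement being proved must be read on such a region. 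Second, your opening step --- upgrading pointwise GES over the compact set $\mathcal{D}$ to GES with constants $\gamma,\lambda$ independent of $\d$ --- is asserted via ``compactness/continuity'' but is not automatic for \emph{global} exponential stability: continuity of the flow in $\d$ is uniform only over compact sets of initial conditions, so the finite-subcover argument does not close by itself; the same remark applies to your use of global Lipschitz constants for $f$, which the hypotheses (locally Lipschitz) furnish only on compacta. In the paper's application both issues evaporate because the frozen dynamics are GES with explicit constants coming from $\rho(\bar{\W}^{11})<1$, independently of the frozen input, and the state space is compact; but since the whole point of your write-up is to supply the details the paper omits, these are the details that need to be supplied.
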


The proof of this result is identical to~\cite[Theorem 3.2]{SL-HA:15},
with the assumption of GES giving the desired function form in the
final step.

\end{document}